\pdfoutput=1
\NeedsTeXFormat{LaTeX2e}[2023-11-01]
\DocumentMetadata{pdfversion=1.7, pdfstandard=A-3u, lang=en-US}
\RequirePackage{cmap}

\documentclass[final, copyright, creativecommons]{eptcs}

\usepackage[T1]{fontenc}
\usepackage[USenglish]{babel}

\usepackage{amsmath}
\usepackage{amsthm}
\usepackage[capitalise, noabbrev, english]{cleveref}
\usepackage[overwrite]{crefthe}
\usepackage{csquotes}
\usepackage[
  type=CC,
  modifier=by,
  version=4.0,
  hyperxmp=false,
]{doclicense}
\usepackage{doi}
\usepackage{ebproof}
\usepackage{enumerate}
\usepackage{fnpct}
\usepackage{fontawesome5}
\usepackage{footnotebackref}
\usepackage[scr=pxtx, scrscaled=0.93113]{mathalpha}
\usepackage{mathpartir}
\usepackage{mathtools}
\usepackage[babel]{microtype}
\usepackage{subcaption}
\usepackage{tasks}
\usepackage{thmtools}
\usepackage{tikz}

 % Must be loaded after `amsmath` and `mathtools`:
\usepackage{mathstyle}

\DeclareTextCommandDefault{\textasciiasterisk}{*}
\DeclareMathSymbol{\to}{\mathbin}{symbols}{"21}
\DeclareMathSymbol{\mathcomma}{\mathpunct}{letters}{"3B}
\DeclareSymbolFontAlphabet\logic{bold}
\NewDocumentCommand\Sp{m}{\cramped{\sp{#1}}}
\NewDocumentCommand\Sb{m}{\cramped{\sb{#1}}}
\DeclareMathOperator\depth{depth}

\AddToHook{env/thebibliography/begin}
  {\phantomsection\addcontentsline{toc}{section}{\refname}}

\thinmuskip=3mu plus 1mu minus 1mu

% amsfonts
\DeclareSymbolFont{AMSa}{U}{msa}{m}{n}
\DeclareSymbolFont{AMSb}{U}{msb}{m}{n}
\DeclareMathDelimiter{\lrcorner}{\mathclose}{AMSa}{"79}{AMSa}{"79}
\DeclareMathSymbol{\subsetneq}{\mathrel}{AMSb}{"28}

% amsmath
\numberwithin{equation}{section}
% Needed to backup the original implementation
% as `amsmath` overrides the `\label` command in math mode:
\AddToHook{begindocument}{\let\Label\label}

% cleveref
\crefname{page}{p.\@}{pp.\@}
\newcounter{rule}
\crefthename{rule}[the]{rule}[the]{rules}
\Crefthename{rule}[The]{rule}[The]{rules}
\creflabelformat{rule}{(#2#1#3)}

\newcounter{mark}

% cm
\pdfglyphtounicode{tfm:cmex10/parenleftbig}{0028}
\pdfglyphtounicode{tfm:cmex10/parenrightbig}{0029}
\pdfglyphtounicode{tfm:cmex10/bracketleftbig}{005B}
\pdfglyphtounicode{tfm:cmex10/bracketrightbig}{005D}
\pdfglyphtounicode{tfm:cmex10/braceleftbig}{007B}
\pdfglyphtounicode{tfm:cmex10/vextendsingle}{007C}
\pdfglyphtounicode{tfm:cmex10/bracerightbig}{007D}

% csquotes

% ebproof
\AddToHook{begindocument}{
  \UseName{check@mathfonts}
  \ebproofset{
    center=true,
    template=$\cramped[\displaystyle]{\inserttext}$,
    right label template=\normalfont\inserttext,
    % Use the interword space width:
    label separation=\fontdimen2\font,
    % Use the math font's default rule thickness:
    rule thickness=\fontdimen8\textfont3\relax,
  }
}
\ebproofnewstyle{strut}{
  template=$
    \xmathstrut{\fpeval{#1 - 1.0}}
    \smash{\cramped[\displaystyle]{\inserttext}}
  $,
}

% Allow `prooftree` inside `mathpar`:
\AddToHook{env/prooftree/begin}{\UseName{@restorepar}}

% fnpct
\MultVariant\footref

% fontawesome5
\pdfglyphtounicode{tfm:fa5free3solid/times}{2716}

% footnotebackref
\ExpandArgs{c}\let{Backref@HyperSymbol}\relax % Remove unwanted spaces.

% latexsym
\DeclareFontFamily{U}{lasy}{}
\DeclareFontShape{U}{lasy}{m}{n}{<-> s * lasy10}{}
\DeclareSymbolFont{lasy}{U}{lasy}{m}{n}
\DeclareMathSymbol{\mdsmwhtsquare}{\mathord}{lasy}{50}
\DeclareMathSymbol{\mdwhtdiamond}{\mathord}{lasy}{51}
\pdfglyphtounicode{tfm:lasy10/a50}{25FB}
\pdfglyphtounicode{tfm:lasy10/a51}{2B26}
\newlength\lasyOffset
\setlength\lasyOffset{-0.15ex}
\NewDocumentCommand\UseLasy{m}{\mathpalette\UseLasyIn#1}
\NewDocumentCommand\UseLasyIn{m m}{\raisebox{\lasyOffset}{\hbox{$\UseName{m@th}#1#2$}}}

% mathalpha
\pdfglyphtounicode{tfm:txsy/D}{1D49F}

% mathtools
\DeclarePairedDelimiter{\parens}{\lparen}{\rparen}
\DeclarePairedDelimiter{\braces}{\lbrace}{\rbrace}
\DeclarePairedDelimiter{\brackets}{\lbrack}{\rbrack}

% microtype
\SetExpansion[context=bibliography, stretch=25, shrink=20]{encoding=*}{}
\AddToHook{env/thebibliography/begin}{\microtypecontext{expansion=bibliography}}

% subcaption
\DeclareCaptionLabelSeparator{labelsep}{\hskip0.9\labelsep}
%\subcaptionsetup{format=hang, labelsep=labelsep, labelformat=simple}

% tasks
\settasks{
  item-indent=\parindent,
  label-width=\parindent,
  label-align=right,
  label-offset=0pt,
}

% thmtools
\declaretheorem[numberwithin=section]{theorem}
\declaretheorem[sibling=theorem]{lemma}
\declaretheorem[style=definition, sibling=theorem]{definition}
\declaretheorem[style=remark, numbered=no, qed=$\lrcorner$]{example, remark}
\declaretheoremstyle[
  headfont=, % Do not change font here to insert proper itailc correction.
  % Auto-numbering:
  headformat=\textit{\NAME}\ExpandArgs{e}\IfBlankTF{\NOTE}{~\NUMBER}{\NOTE},
  notebraces=\empty\empty,
  postheadspace=\labelsep,
]{case}
\newcounter{proof}
\AddToHook{env/proof/begin}{\stepcounter{proof}}
\declaretheorem[style=case, numberwithin=proof]{case}

% tikz
\usetikzlibrary{
  arrows.meta,
  decorations.pathmorphing,
  trees,
}
\tikzset{
  every picture/.append style={semithick, line cap=round},
  clap/.style={trim left=#1, trim right=#1},
  clap/.default=0pt,
  math/.style={execute at begin node=$, execute at end node=$},
  leadsto/.pic={
    \draw [
      -{To[width=2ex, length=0.3*2em]},
      decorate,
      decoration={snake, segment length=2em/2, amplitude=2ex/5},
      pic actions,
    ] (0, 0) to +(2em, 0);
  },
}

\NewDocumentCommand\titech{}
  {\institute{Tokyo Institute of Technology\\ Tokyo, Japan}}
\title{%
  Syntactic Cut-Elimination for
  Provability Logic\/~\texorpdfstring{$\logic{GL}$}{GL}\break\space
  via~Nested Sequents%
}
\author{%
  Akinori Maniwa\titech \email{maniwa.a.aa@m.titech.ac.jp}\and
  Ryo Kashima\titech \email{kashima@is.titech.ac.jp}%
}
\ExpandArgs{Nc}\NewCommandCopy\titlerunning{@title}

\hypersetup{
  pdftitle={Syntactic Cut-Elimination for Provability Logic GL via~Nested Sequents},
  pdfauthor={%
    Akinori Maniwa <maniwa.a.aa@m.titech.ac.jp>,
    Ryo Kashima <kashima@is.titech.ac.jp>%
  },
  pdfkeywords={%
    Cut-elimination,
    Provability~logic,
    Nested~sequents,
    Proof~theory%
  },
  pdfsubject=\event,
  bookmarksnumbered=true,
}

\begin{document}

\maketitle

\begin{abstract}
  The cut-elimination procedure for the provability logic is known to be
  problematic: a Löb-like rule keeps cut-formulae intact on reduction,
  even in the principal case, thereby complicating the proof of termination.
  In this paper, we present a syntactic cut-elimination proof
  based on nested sequents, a generalization of sequents that
  allows a sequent to contain other sequents as single elements.
  A similar calculus was developed by Poggiolesi~(\UseName{hyper@link}{cite}{cite.poggiolesi2009purely}{2009}), but
  there are certain ambiguities in the proof.
  Adopting the idea of Kushida~(\UseName{hyper@link}{cite}{cite.kushida2020proof}{2020})
  into nested sequents,
  our proof does not require an extra measure on cuts or
  error-prone, intricate rewriting on derivations, but only
  straightforward inductions, thus leading to less ambiguity and confusion.
\end{abstract}

\begin{list}{}{%
  \def\and{\unskip\nobreakspace\textperiodcentered\space}%
  \rightmargin\leftmargin
}
  \item
    \nointerlineskip\leavevmode
    \paragraph*{Keywords:}\removelastskip
      Cut-elimination\and
      Provability~logic\and
      Nested~sequents\and
      Proof~theory
\end{list}

\section{Introduction}

The provability logic~$\logic{GL}$, named after Gödel and Löb, is a modal logic
extending $\logic K$ with the Löb axiom $\UseLasy\mdsmwhtsquare (\UseLasy\mdsmwhtsquare A \to A) \to \UseLasy\mdsmwhtsquare A$, where
$\UseLasy\mdsmwhtsquare A$ can be roughly read as \textquote{$A$ is provable in Peano arithmetic}
(see, e.g.\@, Boolos~\cite{boolos1994logic} for more details).
Computationally, the Löb axiom represents a kind of
recursion~(e.g.\@,~\cite{kavvos2021intensionality,nakano2001fixed-point}),
and indeed in Kripke semantics the axiom is
interpreted as just an induction on its model.
Therefore, $\logic{GL}$ exhibits a certain \textquote{recursiveness} as its nature.

From a proof-theoretical viewpoint, a sequent calculus for $\logic{GL}$ is
obtained by the following single modal rule~\cite{leivant1981proof}:
\begin{prooftree*}[right label template=\rlap{\normalfont\inserttext}]
  \hypo { \Gamma, \UseLasy\mdsmwhtsquare \Gamma, \UseLasy\mdsmwhtsquare A \Rightarrow A }
  \infer1[(GLR)] { \Gamma', \UseLasy\mdsmwhtsquare \Gamma \Rightarrow \UseLasy\mdsmwhtsquare A, \Delta }
  \def\RuleName{GLR}
  \rewrite{%
    \raisebox\baselineskip{%
      \llap{%
        \ExpandArgs{c}\let{@currentlabelname}\RuleName%
        \refstepcounter{rule}\Label{rule:GLR}%
      }%
    }%
    \box\treebox
  }
\end{prooftree*}
where $\UseLasy\mdsmwhtsquare A$ is called \emph{diagonal formula}.
It is not difficult to prove the cut-elimination theorem using
semantical arguments~\cite{sambin+1982modal,avron1984modal}, but syntactically,
the diagonal formula is quite problematic: it appears
in both the premise and the conclusion of~\labelcref*{rule:GLR}, so
the standard double induction on the cut-formula and height fails.

Valentini~\cite{valentini1983modal} proposed a proof
using a third induction parameter, called the \emph{width} of a cut,
to justify the reduction involving \cref*{rule:GLR}\@.
Nevertheless, Valentini's proof is very brief and
only describes the principal cases for~\labelcref*{rule:GLR},
which raised a question about its termination
(see~\cite{goré+2012valentinis,moen2001proposed}).
In response, Goré and Ramanayake~\cite{goré+2012valentinis} confirmed
the validity of Valentini's arguments
by carefully analyzing the notion of width, but also pointed out,
overlooked by Valentini, that the width can be increased
by reduction in some cases~\cite[Remark~21]{goré+2012valentinis}.
Although such an increase is
certainly acceptable~\cite[Lemma~19]{goré+2012valentinis},
it makes their proof more complicated and non-trivial.

In this paper, we propose a more clarifying approach to
syntactic cut-elimination for $\logic{GL}$; unlike Valentini's,
our calculus is based on \emph{nested sequents}~\cite
  {kashima1994cut-free,brünnler2009deep,poggiolesi2009method}.
A similar calculus was developed by Poggiolesi~\cite{poggiolesi2009purely},
along with a syntactic cut-elimination proof
using a third induction parameter specific to its sequent structure.
This proof is rather simple and seemingly sound, but there are still
certain ambiguities around the third parameter (see \cref{sec:problem}),
and thus the termination is again imprecise.
These matters suggest that
while an additional measure on cuts could indeed resolve the problem,
it would not necessarily lead to a straightforward triple-induction proof, but
might require more careful checks, even where there seems less troublesome.

Instead of following the triple-induction approach,
we adopt the idea presented in Kushida~\cite{kushida2020proof},
also in Borga~\cite{borga1983some}, of introducing a subprocedure,
called \emph{diagonal-formula-elimination} in this paper, that removes
the diagonal formula in the premise prior to the reduction in question.
This helps us avoid the problematic cut-reduction and
recover the standard double induction proof of cut-elimination.
The advantage of nested sequents in employing this method is that,
thanks to their sequent structure,
it is easier to talk about where the diagonal formula is used in a derivation.
We take this advantage further by introducing \emph{annotations},
demonstrating that the nested-sequent basis allows
for much more concise and clear arguments,
based solely on a series of straightforward inductions.

The rest of the paper is organized as follows:
\Cref{sec:calculus} defines a nested sequent calculus for $\logic{GL}$, and
\cref{sec:problem} illustrates the problem on the cut-reduction method and
gives an overview of our approach.
\Cref{sec:annotation} introduces an auxiliary calculus with
additional information on the use of the diagonal formula.
\Cref{sec:diagonal-elim} demonstrates the procedure
for eliminating diagonal formulae, which
leads to the cut-elimination theorem in \cref{sec:cut-elimination}.
Finally, we conclude with some discussions in \cref{sec:conclusion}.

\section{The Calculus}\label{sec:calculus}

In this section,
we introduce a nested sequent calculus for $\logic{GL}$ with a one-sided formulation.
Our system is not very special as a nested sequent calculus,
so we only give a brief description here.
For a more detailed and general introduction to nested sequent calculus itself,
see, e.g.\@, Brünnler~\cite{brünnler2009deep}.

A \emph{formula} is defined by the following grammar:
\[
  A, B \Coloneq \left.
    \alpha
  \mathrel{} \mathclose{} \middle \vert \mathopen{} \mathrel{}
    \alpha \cramped{\sp\bot}
  \mathrel{} \mathclose{} \middle \vert \mathopen{} \mathrel{}
    A \land B
  \mathrel{} \mathclose{} \middle \vert \mathopen{} \mathrel{}
    A \lor B
  \mathrel{} \mathclose{} \middle \vert \mathopen{} \mathrel{}
    \UseLasy\mdsmwhtsquare A
  \mathrel{} \mathclose{} \middle \vert \mathopen{} \mathrel{}
    \UseLasy\mdwhtdiamond A
    \mkern1mu\text,
  \right.
\]
where $\alpha$ and $\alpha\cramped{\sp\bot}$ denote positive and negative atoms respectively,
both taken from a certain countable set.
The \emph{negation}~$ A\cramped{\sp\bot}$ of a formula~$A$ is defined inductively
in the usual way of extending duality on atoms using De Morgan's laws.
We may use $A \to B$ as an abbreviation for $A \cramped{\sp\bot}\lor B$.

A \emph{(nested) sequent} is defined by the following grammar:
\[
  \Gamma, \Delta \Coloneq \left.
    \mkern\medmuskip\mathord\cdot\vphantom{x}\mkern\medmuskip
  \mathrel{} \mathclose{} \middle \vert \mathopen{} \mathrel{}
    \Gamma, A
  \mathrel{} \mathclose{} \middle \vert \mathopen{} \mathrel{}
    \Gamma, \brackets{\Delta}
    \mkern1mu\text,
  \right.
\]
where \textquote{$\mkern\medmuskip\mathord\cdot\vphantom{x}\mkern\medmuskip$} denotes an empty sequent, and
the notation~$\brackets{\Delta}$ indicates that the sequent~$\Delta$ is
being placed as an element in another sequent (i.e.\@, \emph{nested}).
We may apply exchange implicitly as usual, so for example, we identify
$A \mathcomma\penalty\binoppenalty B \mathcomma\penalty\binoppenalty \brackets{C} \mathcomma\penalty\binoppenalty \brackets[\big]{D \mathcomma\penalty\binoppenalty \brackets{E}}$ with
$B \mathcomma\penalty\binoppenalty \brackets{C} \mathcomma\penalty\binoppenalty A \mathcomma\penalty\binoppenalty \brackets[\big]{\brackets{E} \mathcomma\penalty\binoppenalty D}$.
The juxtaposition of two sequents $\Gamma$ and $\Delta$ is written
simply with a comma~\textquote{$\mathcomma$} as $\Gamma, \Delta$ as usual.

Intuitively, a nested sequent represents a tree consisting of ordinary sequents
(i.e.\@, multisets of formulae) by means of the bracket~$\brackets{\mathord{-}}$ nesting;
for example, the sequent
$A \mathcomma\penalty\binoppenalty \brackets[\big]{\brackets{B} \mathcomma\penalty\binoppenalty \brackets{C \mathcomma\penalty\binoppenalty D}} \mathcomma\penalty\binoppenalty \brackets[\big]{E \mathcomma\penalty\binoppenalty \brackets{F \mathcomma\penalty\binoppenalty G \mathcomma\penalty\binoppenalty H}}$
corresponds to the following tree structure:
\begin{gather*}
  \tikz[
    clap, grow cyclic,
    level distance=6.5ex,
    level/.append style={
      sibling angle={120 / #1},
      clockwise from={90 + 60 / #1},
    },
    every node/.append style={math},
    edge from parent/.append style={-Stealth},
    inner sep=0pt, outer sep=0.4em,
  ] \node { A }
    child {
      node[circle] { \mkern\medmuskip\mathord\cdot\vphantom{x}\mkern\medmuskip }
        child { node { B } }
        child { node { C, D } }
    }
    child {
      node[circle] { E }
        child[missing]
        child { node { F, G, H \enskip } }
    };
\end{gather*}
By considering such a tree as a shape within a Kripke model,
we can obtain modal rules directly from Kripke semantics.
From the perspective of structural proof theory, on the other hand,
$\brackets{\mathord{-}}$ is a structure corresponding to~$\UseLasy\mdsmwhtsquare$,
just as \textquote{$\mathcomma$} is to~$\lor$,
allowing for modal reasoning in an analytic way.

Before getting into our proof system,
we need to introduce the notion of \emph{context}.
A \emph{unary context} is informally
a sequent with a single \emph{hole}~$\braces{\mathord{-}}$ as a placeholder,
formally defined by the following grammar:
\[
  \Gamma\braces*{\mathord{-}} \Coloneq \left. \Delta, \braces*{\mathord{-}}
  \mathrel{} \mathclose{} \middle \vert \mathopen{} \mathrel{}
  \Delta, \brackets[\big]{\Gamma\braces*{\mathord{-}}}
  \mkern1mu\text. \right.
\]
Given a unary context~$\Gamma\braces{\mathord{-}}$ and a sequent~$\Delta$,
we write $\Gamma\braces{\Delta}$ for the sequent obtained
by replacing $\braces{\mathord{-}}$ with $\Delta$ in $\Gamma\braces{\mathord{-}}$.
For instance, $\Gamma\braces{\mathord{-}} \equiv A \mathcomma\penalty\binoppenalty \brackets{B \mathcomma\penalty\binoppenalty C} \mathcomma\penalty\binoppenalty \brackets[\big]{D \mathcomma\penalty\binoppenalty \braces{\mathord{-}}}$
is a unary context, and then $\Gamma\braces[\big]{E, \brackets{F, G}}$ represents
the sequent $A \mathcomma\penalty\binoppenalty \brackets{B \mathcomma\penalty\binoppenalty C} \mathcomma\penalty\binoppenalty \brackets[\big]{D \mathcomma\penalty\binoppenalty E \mathcomma\penalty\binoppenalty \brackets{F \mathcomma\penalty\binoppenalty G}}$.
When filling an empty sequent into a context,
we omit its symbol~\textquote{$\mkern\medmuskip\mathord\cdot\vphantom{x}\mkern\medmuskip$} from the result;
that is, $\Gamma\braces{}$ means the sequent~$\Gamma\braces{\mkern\medmuskip\mathord\cdot\vphantom{x}\mkern\medmuskip}$, which
is of course also distinguished from the context~$\Gamma\braces{\mathord{-}}$.
A \emph{binary context} $\Gamma\braces{\cramped{\mathord{-}\sb{1}}}\braces{\cramped{\mathord{-}\sb{2}}}$,
a sequent with two distinct holes of~$\braces{\cramped{\mathord{-}\sb{1}}}$ and~$\braces{\cramped{\mathord{-}\sb{2}}}$,
is formally defined and used in a similar way.

\begin{definition}[Depth]
  The \emph{depth} of a unary context is defined inductively as follows:
  \begin{alignat*}{2}
    &\depth\parens[\big]{\Delta, \braces*{\mathord{-}}} &&= 0 \mkern1mu\text; \\
    &\depth\parens[\big]{\Delta, \brackets*{\Gamma\braces*{\mathord{-}}}} &&=
      \depth\parens[\big]{\Gamma\braces*{\mathord{-}}} + 1 \mkern1mu\text.
  \end{alignat*}
  It is, in short,
  the nesting depth of the bracket~$\brackets{\mathord{-}}$ at the hole~$\braces{\mathord{-}}$ position.
\end{definition}

\Cref{fig:rules} shows the inference rules of our system.
\begin{figure}[t]
  \begin{mathpar}
    \begin{prooftree}
      \hypo \mathstrut
      \infer1[(id)] { \Gamma\braces*{\alpha\cramped{\sp\bot}, \alpha} }
      \def\RuleName{id}
      \rewrite{%
        \raisebox\baselineskip{%
          \llap{%
            \ExpandArgs{c}\let{@currentlabelname}\RuleName%
            \refstepcounter{rule}\Label{rule:id}%
          }%
        }%
        \box\treebox
      }
    \end{prooftree}
    \and
    \begin{prooftree}
      \hypo    { \Gamma\braces*{A} }
      \hypo    { \Gamma\braces*{B} }
      \infer2[($\land$)] { \Gamma\braces*{A \land B} }
      \def\RuleName{$\land$}
      \rewrite{%
        \raisebox\baselineskip{%
          \llap{%
            \ExpandArgs{c}\let{@currentlabelname}\RuleName%
            \refstepcounter{rule}\Label{rule:and}%
          }%
        }%
        \box\treebox
      }
    \end{prooftree}
    \and
    \begin{prooftree}
      \hypo   { \Gamma\braces*{A, B} }
      \infer1[($\lor$)] { \Gamma\braces*{A \lor B} }
      \def\RuleName{$\lor$}
      \rewrite{%
        \raisebox\baselineskip{%
          \llap{%
            \ExpandArgs{c}\let{@currentlabelname}\RuleName%
            \refstepcounter{rule}\Label{rule:or}%
          }%
        }%
        \box\treebox
      }
    \end{prooftree}
    \\
    \begin{prooftree}
      \hypo    { \Gamma\braces[\big]{\left\lbrack \UseLasy\mdwhtdiamond  A\cramped{\sp\bot}, A \right\rbrack} }
      \infer1[($\UseLasy\mdsmwhtsquare$)] { \Gamma\braces*{\UseLasy\mdsmwhtsquare A} }
      \def\RuleName{$\UseLasy\mdsmwhtsquare$}
      \rewrite{%
        \raisebox\baselineskip{%
          \llap{%
            \ExpandArgs{c}\let{@currentlabelname}\RuleName%
            \refstepcounter{rule}\Label{rule:box}%
          }%
        }%
        \box\treebox
      }
    \end{prooftree}
    \and
    \begin{prooftree}
      \hypo    { \Gamma\braces[\big]{\Delta\braces*{A}, \UseLasy\mdwhtdiamond A} }
      \infer1[($\UseLasy\mdwhtdiamond$)\enskip if~$\depth(\Delta\braces*{\mathord{-}}) > 0$]
        { \Gamma\braces[\big]{\Delta\braces*{}, \UseLasy\mdwhtdiamond A} }
      \def\RuleName{$\UseLasy\mdwhtdiamond$}
      \rewrite{%
        \raisebox\baselineskip{%
          \llap{%
            \ExpandArgs{c}\let{@currentlabelname}\RuleName%
            \refstepcounter{rule}\Label{rule:dia}%
          }%
        }%
        \box\treebox
      }
    \end{prooftree}
  \end{mathpar}
  \par\removelastskip
  \caption{Inference rules for $\logic{GL}$.}\label{fig:rules}
\end{figure}
The non-modal rule are fairly standard, except for the form of sequents.
\Cref{rule:box} is a kind of the Löb rule, as is~\cref{rule:GLR}, and
is the only rule in the system that consumes a~$\brackets{\mathord{-}}$.
Reflecting the transitivity of $\logic{GL}$-models, we can deduce $\UseLasy\mdwhtdiamond A$
by \cref{rule:dia} from $A$ at a deeper location within several $\brackets{\mathord{-}}$'s.
A contraction for~$\UseLasy\mdwhtdiamond A$ is incorporated into \cref*{rule:dia}
to ensure the admissibility of contraction~(\cref{claim:admissible}).

This is indeed a complete proof system for $\logic{GL}$ in the following sense,
but we omit the proof here.

\begin{theorem}[Completeness]
  A formula $A$ is a theorem of\/ $\logic{GL}$
  if and only if the sequent $A$ is provable in the calculus.
\end{theorem}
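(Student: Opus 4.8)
The plan is to prove the two directions separately, with soundness being routine and completeness being the substantive half. For soundness, I would first fix a semantics for nested sequents: interpret a sequent $\Gamma$ as a formula $\iota(\Gamma)$ by reading ``$\mathcomma$'' as $\lor$ and each $\brackets{\Delta}$ as $\UseLasy\mdsmwhtsquare\iota(\Delta)$, so that provability of the sequent $A$ coincides with provability of the formula $A$. Then I would check, rule by rule in \cref{fig:rules}, that if the premises are $\logic{GL}$-valid (equivalently, valid on all transitive conversely-well-founded Kripke frames) then so is the conclusion. The only non-immediate cases are \cref{rule:box} and \cref{rule:dia}: for \cref{rule:box} one uses exactly the Löb/transitivity argument, reading the nesting as a step to an accessible world and using converse well-foundedness to discharge the $\UseLasy\mdwhtdiamond A\cramped{\sp\bot}$; for \cref{rule:dia} one uses transitivity to collapse the several nesting levels witnessing $\depth(\Delta\braces*{\mathord{-}}) > 0$. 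This gives ``if the sequent $A$ is provable, then $A$ is a theorem of $\logic{GL}$''.

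For the converse, I would show that the calculus proves (the sequent translations of) all axioms of a Hilbert system for $\logic{GL}$ and is closed under its rules. Concretely: first establish as lemmas the admissibility of weakening and of the generalized identity axiom $\Gamma\braces{A\cramped{\sp\bot}, A}$ for arbitrary $A$ (by induction on $A$, using the non-modal rules together with \cref{rule:box} and \cref{rule:dia} for the modal case). Next, derive the sequent $A$ for each propositional tautology $A$ and for the $\logic K$-axiom $\UseLasy\mdsmwhtsquare(B \to C) \to (\UseLasy\mdsmwhtsquare B \to \UseLasy\mdsmwhtsquare C)$ and the Löb axiom $\UseLasy\mdsmwhtsquare(\UseLasy\mdsmwhtsquare B \to B) \to \UseLasy\mdsmwhtsquare B$; the Löb axiom is where the shape of \cref{rule:box} — keeping $\UseLasy\mdwhtdiamond A\cramped{\sp\bot}$ in the premise — and the transitivity built into \cref{rule:dia} both get used. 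Finally, simulate modus ponens (this needs the cut rule, or rather the cut-elimination theorem proved later, but for completeness alone it suffices to assume cut is admissible as established in \cref{sec:cut-elimination}, or simply to work in the calculus extended with cut) and necessitation (immediate from \cref{rule:box} applied to a derivation of $A$, after an implicit weakening). Assembling these pieces by induction on a Hilbert derivation yields that every theorem of $\logic{GL}$ gives a provable sequent.

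The main obstacle is the completeness direction, and within it the faithful simulation of the Hilbert system in a \emph{cut-containing} calculus without circularity — since the whole point of the paper is cut-elimination, one must be careful to phrase completeness for the system \emph{with} cut (or appeal forward to \cref{sec:cut-elimination}) rather than pretend cut-freeness is available here. A secondary subtlety is the derivation of the Löb axiom: one has to manage the interaction between the implicit contraction folded into \cref{rule:dia}, the single-premise form of \cref{rule:box}, and the depth side-condition, so that the $\UseLasy\mdwhtdiamond$-formulae bookkeeping works out; this is the genuinely modal computation and is worth spelling out rather than leaving to the reader. Everything else — soundness of the non-modal rules, admissibility of weakening and generalized identity — is a routine induction, which is presumably why the paper elects to omit the proof.
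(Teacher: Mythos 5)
The paper omits this proof entirely, so there is nothing to compare against; your proposal is the standard argument and is essentially sound. The decomposition (soundness via the formula translation reading \textquote{$\mathcomma$} as $\lor$ and $\brackets{\mathord{-}}$ as $\UseLasy\mdsmwhtsquare$; completeness via simulating a Hilbert system, with the paper's own worked derivations of the Löb and K axioms serving as the key lemmas) is exactly what one would expect, and you are right to be careful that modus ponens requires cut, so completeness must be stated for the calculus with \labelcref{rule:cut} or must appeal forward to cut-elimination. One small imprecision: your necessitation step is not quite \textquote{\labelcref{rule:box} after an implicit weakening.} The weakening rule of \cref{claim:admissible} only inserts material into a hole of an \emph{already derived} sequent, and the empty nested sequent $\brackets{\mkern\medmuskip\mathord\cdot\vphantom{x}\mkern\medmuskip}$ is not derivable, so you cannot reach $\brackets{\UseLasy\mdwhtdiamond A\cramped{\sp\bot}, A}$ from $\mathord\vdash\penalty\binoppenalty\mskip\thickmuskip A$ by weakening alone. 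What you need is the (routine, but separate) observation that every rule is closed under arbitrary contexts, so a derivation of $A$ can be replayed inside the context $\brackets{\UseLasy\mdwhtdiamond A\cramped{\sp\bot}, \braces{\mathord{-}}}$ — the depth side-condition on \labelcref{rule:dia} is only a lower bound and survives the wrapping — after which \labelcref{rule:box} applies. With that lemma added, the argument goes through.
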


\begin{example}[The \NoCaseChange{Löb} axiom]\label{example:löb}
  A proof of
  \begin{math}
    \UseLasy\mdsmwhtsquare (\UseLasy\mdsmwhtsquare \alpha \to \alpha) \to \UseLasy\mdsmwhtsquare \alpha \equiv
    \UseLasy\mdwhtdiamond \parens{\UseLasy\mdsmwhtsquare \alpha \land \alpha\cramped{\sp\bot}} \lor \UseLasy\mdsmwhtsquare \alpha
  \end{math}
  is as follows:
  \begin{prooftree*}[strut=1.03]
    \infer0[(id)]  { \UseLasy\mdwhtdiamond \parens{\UseLasy\mdsmwhtsquare \alpha \land \alpha\cramped{\sp\bot}}, \brackets[\big]{\UseLasy\mdwhtdiamond \alpha\cramped{\sp\bot}, \alpha, \brackets*{\UseLasy\mdwhtdiamond \alpha\cramped{\sp\bot}, \alpha, \alpha\cramped{\sp\bot}}} }
    \infer1[($\UseLasy\mdwhtdiamond$)] { \UseLasy\mdwhtdiamond \parens{\UseLasy\mdsmwhtsquare \alpha \land \alpha\cramped{\sp\bot}}, \brackets[\big]{\UseLasy\mdwhtdiamond \alpha\cramped{\sp\bot}, \alpha, \brackets*{\UseLasy\mdwhtdiamond \alpha\cramped{\sp\bot}, \alpha}} }
    \infer1[($\UseLasy\mdsmwhtsquare$)] { \UseLasy\mdwhtdiamond \parens{\UseLasy\mdsmwhtsquare \alpha \land \alpha\cramped{\sp\bot}}, \brackets*{\UseLasy\mdwhtdiamond \alpha\cramped{\sp\bot}, \alpha, \UseLasy\mdsmwhtsquare \alpha} }
    \set{right label template=\rlap{\normalfont\inserttext}}
    \infer0[(id)]  { \UseLasy\mdwhtdiamond \parens{\UseLasy\mdsmwhtsquare \alpha \land \alpha\cramped{\sp\bot}}, \brackets*{\UseLasy\mdwhtdiamond \alpha\cramped{\sp\bot}, \alpha, \alpha\cramped{\sp\bot}} }
    \infer2[($\land$)] { \UseLasy\mdwhtdiamond \parens{\UseLasy\mdsmwhtsquare \alpha \land \alpha\cramped{\sp\bot}}, \brackets*{\UseLasy\mdwhtdiamond \alpha\cramped{\sp\bot}, \alpha, \UseLasy\mdsmwhtsquare \alpha \land \alpha\cramped{\sp\bot}} }
    \infer1[($\UseLasy\mdwhtdiamond$)] { \UseLasy\mdwhtdiamond \parens{\UseLasy\mdsmwhtsquare \alpha \land \alpha\cramped{\sp\bot}}, \brackets*{\UseLasy\mdwhtdiamond \alpha\cramped{\sp\bot}, \alpha} }
    \infer1[($\UseLasy\mdsmwhtsquare$)] { \UseLasy\mdwhtdiamond \parens{\UseLasy\mdsmwhtsquare \alpha \land \alpha\cramped{\sp\bot}}, \UseLasy\mdsmwhtsquare \alpha }
    \infer1[($\lor$)]  { \UseLasy\mdwhtdiamond \parens{\UseLasy\mdsmwhtsquare \alpha \land \alpha\cramped{\sp\bot}} \lor \UseLasy\mdsmwhtsquare \alpha }
    \AddToHookNext{env/equation*/end}{\qedhere}
  \end{prooftree*}
\end{example}

\begin{remark}
  Poggiolesi~\cite{poggiolesi2009purely} also developed
  a nested sequent calculus for $\logic{GL}$, but
  under the name~\emph{tree-hypersequents}, with a two-sided representation.
  The main difference\footnote{
    Another difference is a form of cut, which
    shall be discussed in \cref{sec:conclusion}.
  } is the rules for~$\UseLasy\mdwhtdiamond$ (or, the left rules for~$\UseLasy\mdsmwhtsquare$).
  Poggiolesi instead employed
  the following two rules (but in our notation):
  \begin{mathpar}
    \begin{prooftree}
      \hypo   { \Gamma\braces[\big]{\brackets*{\Delta, \UseLasy\mdsmwhtsquare A\cramped{\sp\bot}}, \UseLasy\mdwhtdiamond A} }
      \hypo   { \Gamma\braces[\big]{\brackets*{\Delta, A}, \UseLasy\mdwhtdiamond A} }
      \infer2 { \Gamma\braces[\big]{\brackets*{\Delta}, \UseLasy\mdwhtdiamond A} }
    \end{prooftree}
    \and
    \begin{prooftree}
      \hypo   { \Gamma\braces[\big]{\brackets*{\Delta, \UseLasy\mdwhtdiamond A}, \UseLasy\mdwhtdiamond A} }
      \infer1 { \Gamma\braces[\big]{\brackets*{\Delta}, \UseLasy\mdwhtdiamond A} }
    \end{prooftree}
  \end{mathpar}
  Nevertheless, there is no essential difference, especially as for provability.
  We shall discuss Poggiolesi's cut-elimination proof
  in \cref{sec:problem}.
\end{remark}

\begin{definition}[Cut]\label{def:cut}
  A \emph{cut} in our calculus has the following form:
  \begin{prooftree*}[right label template=\rlap{\normalfont\inserttext}]
    \hypo    { \Gamma\braces*{A} }
    \hypo    { \Gamma\braces*{A\cramped{\sp\bot}} }
    \infer2[(cut)] { \Gamma\braces*{} }
    \def\RuleName{cut}
    \rewrite{%
      \raisebox\baselineskip{%
        \llap{%
          \ExpandArgs{c}\let{@currentlabelname}\RuleName%
          \refstepcounter{rule}\Label{rule:cut}%
        }%
      }%
      \box\treebox
    }
  \end{prooftree*}
\end{definition}

The \emph{height} of a derivation is defined in the standard manner, i.e.\@,
the maximum length of consecutive applications of inference rules
in that derivation.
Several common rules required for the cut-elimination procedure
are shown to be (height-preserving) admissible:

\begin{lemma}[Inversion]\label{claim:inversion}
  The following rules are height-preserving admissible:
  \begin{mathpar}
    \begin{prooftree}
      \hypo{ \Gamma\braces*{A \land B} }
      \infer1[($\land$)\textsuperscript{\textminus1}] { \Gamma\braces*{A} }
    \end{prooftree}
    \and
    \begin{prooftree}
      \hypo{ \Gamma\braces*{A \land B} }
      \infer1[($\land$)\textsuperscript{\textminus1}] { \Gamma\braces*{B} }
    \end{prooftree}
    \and
    \begin{prooftree}
      \hypo     { \Gamma\braces*{A \lor B} }
      \infer1[($\lor$)\textsuperscript{\textminus1}] { \Gamma\braces*{A, B} }
    \end{prooftree}
    \and
    \begin{prooftree}
      \hypo      { \Gamma\braces*{\UseLasy\mdsmwhtsquare A} }
      \infer1[($\UseLasy\mdsmwhtsquare$)\textsuperscript{\textminus1}] { \Gamma\braces[\big]{\brackets*{\UseLasy\mdwhtdiamond A\cramped{\sp\bot}, A}} }
    \end{prooftree}
  \end{mathpar}
\end{lemma}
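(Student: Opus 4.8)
I would prove the four inversion principles one at a time, each by induction on the height of the given derivation~$\mathcal{D}$ of its premise --- $\Gamma\braces*{A \land B}$, $\Gamma\braces*{A \lor B}$, or $\Gamma\braces*{\UseLasy\mdsmwhtsquare A}$, respectively. Throughout, since exchange is implicit, ``the displayed formula'' refers to the distinguished occurrence sitting at the hole of the context~$\Gamma\braces*{\mathord{-}}$, and the target is the sequent obtained by replacing that occurrence with~$A$ (or~$B$), with~$A, B$, or with~$\brackets*{\UseLasy\mdwhtdiamond A\cramped{\sp\bot}, A}$. The four arguments are independent and all have the same shape.

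\textbf{Base case and inductive step.} If $\mathcal{D}$ is a single instance of \cref{rule:id}, then $\Gamma\braces*{A \land B}$ (etc.) already exhibits a pair $\alpha\cramped{\sp\bot}, \alpha$ disjoint from the displayed non-atomic formula; that pair is left untouched by the replacement, so the target is again an instance of \cref{rule:id}. For the inductive step, let $R$ be the last rule applied in~$\mathcal{D}$. If the displayed formula is the principal formula of~$R$, then $R$ is forced to be \cref{rule:and} (resp.~\cref{rule:or}, \cref{rule:box}), and one of its premises is already the target, with strictly smaller height, so nothing further is needed. Otherwise the displayed occurrence is passive for~$R$: rewriting the conclusion of~$R$ via a suitable context that exposes both $R$'s principal formula and the displayed occurrence, I observe that the displayed occurrence persists, in the same relative position, in each premise of~$R$; I then apply the induction hypothesis to those premises and reapply~$R$. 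As the induction hypothesis is height-preserving and $R$ contributes the same single step, the overall height is preserved.

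\textbf{Main obstacle.} The only subtlety, and the step I expect to require genuine care, is the passive subcase of $(\UseLasy\mdsmwhtsquare)^{-1}$: unlike the other three inversions, replacing $\UseLasy\mdsmwhtsquare A$ by $\brackets*{\UseLasy\mdwhtdiamond A\cramped{\sp\bot}, A}$ alters the bracket (tree) structure, so I must check that $R$ still applies to the modified premises. This reduces to two observations. First, the replacement introduces no new atoms and removes none, so it cannot invalidate any instance of \cref{rule:id}. Second, the only side condition anywhere in the calculus is the inequality $\depth(\Delta\braces*{\mathord{-}}) > 0$ attached to \cref{rule:dia}, and inserting one further bracket along a branch can only increase such a depth, never decrease it; hence the condition is preserved. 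Consequently every reapplication of~$R$ remains legitimate, and all four inductions close with no need for an auxiliary measure.
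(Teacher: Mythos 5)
Your proposal is correct and is exactly the paper's (unelaborated) proof: a height-preserving induction on the derivation, splitting on whether the distinguished occurrence is principal in the last rule. Your remark on the $(\UseLasy\mdwhtdiamond)$ side condition is the right point to check, though strictly the new bracket introduced by $(\UseLasy\mdsmwhtsquare)^{-1}$ hangs off as a sibling rather than lying on the path to the hole, so the relevant depth is in fact unchanged rather than increased --- either way the condition $\depth(\Delta\braces*{\mathord{-}})>0$ persists and the induction closes.
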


\begin{proof}
  By induction on derivation.
\end{proof}

\begin{lemma}[Identity]\label{claim:identity}
  The following rule is admissible:
  \begin{prooftree*}[right label template=\rlap{\normalfont\inserttext}]
    \infer0[(id)] { \Gamma\braces*{A\cramped{\sp\bot}, A} }
  \end{prooftree*}
\end{lemma}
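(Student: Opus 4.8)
The plan is to prove, by induction on the structure of the formula~$A$, the context-uniform statement that $\Gamma\braces*{A\cramped{\sp\bot}, A}$ is derivable for \emph{every} unary context~$\Gamma\braces*{\mathord{-}}$, so that the induction hypothesis is available for every proper subformula of~$A$ together with every unary context. Throughout I use without further comment the routine fact that substituting one unary context into the hole of another again yields a unary context (an easy induction on context structure), so that an expression such as $\Gamma\braces[\big]{\Sigma, \brackets*{\Sigma', \mathord{-}}}$ denotes a genuine unary context to which the induction hypothesis may be applied.

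For the base case, if $A$ is a positive atom~$\alpha$ then $\Gamma\braces*{A\cramped{\sp\bot}, A}$ is an instance of \cref{rule:id}, and if $A = \alpha\cramped{\sp\bot}$ it is that same instance up to exchange. If $A = B \land C$, so $A\cramped{\sp\bot} = B\cramped{\sp\bot} \lor C\cramped{\sp\bot}$, I build a derivation from the bottom up: \cref{rule:or} reduces the goal to $\Gamma\braces*{B\cramped{\sp\bot}, C\cramped{\sp\bot}, B \land C}$, and \cref{rule:and} then reduces that to $\Gamma\braces*{B\cramped{\sp\bot}, C\cramped{\sp\bot}, B}$ and $\Gamma\braces*{B\cramped{\sp\bot}, C\cramped{\sp\bot}, C}$, each an instance of the induction hypothesis for~$B$ (resp.~$C$) under the context $\Gamma\braces*{C\cramped{\sp\bot}, \mathord{-}}$ (resp.~$\Gamma\braces*{B\cramped{\sp\bot}, \mathord{-}}$). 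The case $A = B \lor C$ is dual, and no form of weakening is needed anywhere.

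The modal cases are the substantive ones. Suppose $A = \UseLasy\mdsmwhtsquare B$, so $A\cramped{\sp\bot} = \UseLasy\mdwhtdiamond B\cramped{\sp\bot}$, and read a derivation of the goal $\Gamma\braces*{\UseLasy\mdwhtdiamond B\cramped{\sp\bot}, \UseLasy\mdsmwhtsquare B}$ from the bottom up: \cref{rule:box} reduces it to $\Gamma\braces[\big]{\UseLasy\mdwhtdiamond B\cramped{\sp\bot}, \brackets*{\UseLasy\mdwhtdiamond B\cramped{\sp\bot}, B}}$; then \cref{rule:dia}, taken with the outer $\UseLasy\mdwhtdiamond B\cramped{\sp\bot}$ as its diagonal formula and with $\Delta\braces*{\mathord{-}} \equiv \brackets*{\UseLasy\mdwhtdiamond B\cramped{\sp\bot}, B, \mathord{-}}$ (of depth $1 > 0$), reduces that to $\Gamma\braces[\big]{\UseLasy\mdwhtdiamond B\cramped{\sp\bot}, \brackets*{\UseLasy\mdwhtdiamond B\cramped{\sp\bot}, B, B\cramped{\sp\bot}}}$; and this last sequent is an instance of the induction hypothesis for~$B$ under the context $\Gamma\braces[\big]{\UseLasy\mdwhtdiamond B\cramped{\sp\bot}, \brackets*{\UseLasy\mdwhtdiamond B\cramped{\sp\bot}, \mathord{-}}}$, the pair to be filled in being exactly $B\cramped{\sp\bot}, B$. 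This mirrors the $\UseLasy\mdsmwhtsquare\alpha$ fragment of the derivation of the Löb axiom given earlier. The case $A = \UseLasy\mdwhtdiamond B$ is symmetric: applying \cref{rule:box} to $\UseLasy\mdsmwhtsquare B\cramped{\sp\bot} = A\cramped{\sp\bot}$ reduces the goal $\Gamma\braces*{\UseLasy\mdsmwhtsquare B\cramped{\sp\bot}, \UseLasy\mdwhtdiamond B}$ to $\Gamma\braces[\big]{\UseLasy\mdwhtdiamond B, \brackets*{\UseLasy\mdwhtdiamond B, B\cramped{\sp\bot}}}$, then \cref{rule:dia} (with $\Delta\braces*{\mathord{-}} \equiv \brackets*{\UseLasy\mdwhtdiamond B, B\cramped{\sp\bot}, \mathord{-}}$) reduces that to $\Gamma\braces[\big]{\UseLasy\mdwhtdiamond B, \brackets*{\UseLasy\mdwhtdiamond B, B\cramped{\sp\bot}, B}}$, which is the induction hypothesis for~$B$.

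The one delicate point — the step I would check most carefully — is precisely this modal case. The premise produced by \cref{rule:box} contains a \emph{fresh} copy of the diagonal formula $\UseLasy\mdwhtdiamond B\cramped{\sp\bot}$ inside the newly opened bracket, and since $\UseLasy\mdwhtdiamond B\cramped{\sp\bot}$ is not the negation of~$B$, that bracket cannot be closed by an identity directly; one must first use \cref{rule:dia} to insert a copy of $B\cramped{\sp\bot}$ into the bracket next to~$B$ — which \cref{rule:dia} permits exactly because of the \emph{outer} occurrence of $\UseLasy\mdwhtdiamond B\cramped{\sp\bot}$ one level up — and only then does the complementary pair $B\cramped{\sp\bot}, B$ become available and the induction hypothesis applicable. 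Everything else is a direct structural induction.
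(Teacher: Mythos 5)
Your proof is correct and follows exactly the route the paper intends — the paper's own proof is just the one-line "By induction on $A$," and your detailed case analysis (in particular the modal case, where \cref{rule:box} followed by \cref{rule:dia} driven by the \emph{outer} $\UseLasy\mdwhtdiamond B\cramped{\sp\bot}$ reduces to the identity on~$B$, mirroring the Löb-axiom derivation) is the standard instantiation of that induction. Nothing to add.
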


\begin{proof}
  By induction on $A$.
\end{proof}

\begin{lemma}[Structural rules]\label{claim:admissible}
  The following rules are height-preserving admissible:
  \begin{mathpar}
    \begin{prooftree}
      \hypo     { \Gamma\braces*{} }
      \infer1[(weak)] { \Gamma\braces*{\Delta} }
    \end{prooftree}
    \and
    \begin{prooftree}
      \hypo         { \Gamma\braces*{A, A} }
      \infer1[(contract)] { \Gamma\braces*{A} }
    \end{prooftree}
    \and
    \begin{prooftree}
      \hypo       { \Gamma\braces[\big]{\Delta\braces*{}, \brackets*{\Delta'}} }
      \infer1[(rebase)\enskip if~$\depth(\Delta\braces*{\mathord{-}}) > 0$]
        { \Gamma\braces[\big]{\Delta\braces*{\Delta'}} }
      \def\RuleName{rebase}
      \rewrite{%
        \raisebox\baselineskip{%
          \llap{%
            \ExpandArgs{c}\let{@currentlabelname}\RuleName%
            \refstepcounter{rule}\Label{rule:rebase}%
          }%
        }%
        \box\treebox
      }
    \end{prooftree}
  \end{mathpar}
\end{lemma}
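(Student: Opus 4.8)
The plan is to prove the three rules \emph{separately}, each by a routine induction on the height of the given derivation, in the order weakening, \labelcref{rule:rebase}, contraction — the last drawing on the first two and on \cref{claim:inversion}. The fact underlying all of them is that \emph{no rule of the calculus changes the nesting depth of a surviving node}: \labelcref{rule:box} only deletes a leaf bracket, \labelcref{rule:dia} and \labelcref{rule:cut} merely delete a formula, and the propositional rules do nothing structural; likewise weakening, inversion, and rebasing never make a node shallower. Consequently the side conditions $\depth(\mathord{-}) > 0$ on \labelcref{rule:dia} and \labelcref{rule:rebase} can never be invalidated when rules are permuted, which removes the only a~priori worry.

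For weakening I would induct on the height of the derivation of $\Gamma\braces*{}$ and inspect its last rule (including \labelcref{rule:cut}). For \labelcref{rule:id} the inserted $\Delta$ leaves the complementary pair in place; otherwise the weakening position is a node of the sequent tree and filling $\Delta$ there inserts no bracket in the middle of an existing path, so one simply weakens the premise(s) by $\Delta$ and replays the last rule. The one slightly delicate rule is \labelcref{rule:box}, but the bracket it consumes occurs only in its premise, so a weakening performed on the conclusion never touches it.

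For \labelcref{rule:rebase} I would again induct on the height. Write $u$ for the node carrying the displaced bracket $\brackets{\Delta'}$ and $v$ for the target node, a proper descendant of $u$. Rebasing leaves the $u$--$v$ path intact and only adds the subtree $\Delta'$ at $v$, so the induced instance on a premise is still a legitimate \labelcref*{rule:rebase} (the hole depth of $\Delta\braces*{\mathord{-}}$ is unchanged), and the last rule can be replayed afterwards: for \labelcref{rule:box} using that the bracket it acts on, even if its contents lie inside $\Delta'$, is moved wholesale; for \labelcref{rule:dia} using that the displaced diamond formula stays put while the formula it contracts only moves deeper. In the base case \labelcref*{rule:id} the rebased sequent is again an instance of \labelcref*{rule:id}.

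Contraction is then by induction on the height. If neither copy of the contracted formula is principal in the last rule — which covers every case of \labelcref{rule:id}, \labelcref{rule:dia} (it never creates a fresh $\UseLasy\mdwhtdiamond$-formula, only deletes a deep one), and \labelcref{rule:cut}, and in which both copies survive into the premise(s) — the last rule commutes with the contraction after the induction hypothesis is applied to the premise(s). If the contracted $A$ is principal, I peel off the last rule and invert the surviving copy using \cref{claim:inversion}: for $A \equiv B \land C$, two inversions plus two recursive contractions feed the two premises of \labelcref{rule:and}; for $A \equiv B \lor C$, one inversion plus two recursive contractions feed the single premise of \labelcref{rule:or}. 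The crux is $A \equiv \UseLasy\mdsmwhtsquare B$ with \labelcref{rule:box} last: the premise carries one $\UseLasy\mdsmwhtsquare B$ alongside the bracket $\brackets{\UseLasy\mdwhtdiamond B\Sp\bot, B}$ that \labelcref{rule:box} acted on, so inverting that $\UseLasy\mdsmwhtsquare B$ yields a second such bracket as a sibling; \labelcref{rule:rebase} (with $\depth = 1$) merges the two siblings into $\brackets{\UseLasy\mdwhtdiamond B\Sp\bot, B, \UseLasy\mdwhtdiamond B\Sp\bot, B}$, two recursive contractions remove the duplicated $\UseLasy\mdwhtdiamond B\Sp\bot$ and $B$, and a final \labelcref{rule:box} restores $\UseLasy\mdsmwhtsquare B$ — all within the height budget, since \labelcref*{rule:rebase} and the inversions are height-preserving. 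Getting exactly this manoeuvre right, rather than any question about depths, is the main obstacle.
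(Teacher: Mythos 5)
Your plan is correct and is essentially the paper's own argument, which the authors compress into a single line (\enquote{by induction on derivation, along with \cref{claim:inversion}}). Your expansion—separate height-preserving inductions, with the principal $\UseLasy\mdsmwhtsquare$-case of contraction handled by inverting the surviving $\UseLasy\mdsmwhtsquare B$ and merging the two sibling brackets via \labelcref{rule:rebase} before two smaller contractions and a final \labelcref{rule:box}—is exactly the intended manoeuvre and is why \labelcref*{rule:rebase} and \cref{claim:inversion} are bundled with this lemma.
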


\begin{proof}
  By induction on derivation, along with \cref{claim:inversion}.
\end{proof}

Semantically, the rebasing rule is to instantiate an arbitrary transition
denoted by $\brackets{\mathord{-}}$ into a more concrete one described by $\Delta\braces{\mathord{-}}$, and
its side-condition corresponds to the transitivity, as in \cref{rule:dia}.

\section{Problem on Cut-Reduction}\label{sec:problem}

In this section, we explain why the standard cut-reduction method
does not work as expected for $\logic{GL}$, even in nested sequents,
together with the problem with Poggiolesi's proof.
We also give an overview of our approach
with an informal description of our rewriting procedure.

The standard double induction fails in the principal case of $\UseLasy\mdwhtdiamond$ and $\UseLasy\mdsmwhtsquare$:
\begin{gather}
  \begin{prooftree}[center=false]
    \small
    \begingroup
      \set{rule style=no rule, template=}
      \begingroup
        \set{rule margin=0pt}
        \hypo{}
        \ellipsis{\small$\mathscr{D}_1$}{}
      \endgroup
    \endgroup
    \infer[no rule]1 { \Gamma\braces*{\Delta\braces*{A\cramped{\sp\bot}}, \UseLasy\mdwhtdiamond A\cramped{\sp\bot}} }
    \infer1[($\UseLasy\mdwhtdiamond$)] { \Gamma\braces*{\Delta\braces*{}, \UseLasy\mdwhtdiamond A\cramped{\sp\bot}} }
    \set{right label template=\rlap{\normalfont\inserttext}}
    \begingroup
      \set{rule style=no rule, template=}
      \begingroup
        \set{rule margin=0pt}
        \hypo{}
        \ellipsis{\small$\mathscr{D}_2$}{}
      \endgroup
    \endgroup
    \infer[no rule]1{ \Gamma\braces*{\Delta\braces*{}, \left\lbrack \UseLasy\mdwhtdiamond A\cramped{\sp\bot}, A \right\rbrack} }
    \infer1[($\UseLasy\mdsmwhtsquare$)] { \Gamma\braces*{\Delta\braces*{}, \UseLasy\mdsmwhtsquare A} }
    \infer2[(cut)] { \Gamma\braces*{\Delta\braces*{}} }
    \def\RuleName{cut}
    \rewrite{%
      \raisebox\baselineskip{%
        \llap{%
          \ExpandArgs{c}\let{@currentlabelname}\RuleName%
          \refstepcounter{rule}\Label{tree:beta}%
        }%
      }%
      \box\treebox
    }
  \end{prooftree}
  \mathrlap{\qquad\tikz[baseline={(0,-\baselineskip)}] \pic[anchor=center]{leadsto};\quad}
  \nonumber
  \displaybreak[0] \\
  \begin{prooftree}[separation=0.5em, strut=1.05]
    \small
    \begingroup
      \set{rule style=no rule, template=}
      \begingroup
        \set{rule margin=0pt}
        \hypo{}
        \ellipsis{\small$\mathscr{D}_1$}{}
      \endgroup
    \endgroup
    \infer[no rule]1{ \Gamma\braces*{\Delta\braces*{A\cramped{\sp\bot}}, \UseLasy\mdwhtdiamond A\cramped{\sp\bot}} }
    \begingroup
      \set{rule style=no rule, template=}
      \begingroup
        \set{rule margin=0pt}
        \hypo{}
        \ellipsis{\small$\mathscr{D}_2$}{}
      \endgroup
    \endgroup
    \infer[no rule]1{ \Gamma\braces*{\Delta\braces*{}, \left\lbrack \UseLasy\mdwhtdiamond A\cramped{\sp\bot}, A \right\rbrack} }
    \infer1[($\UseLasy\mdsmwhtsquare$)] { \Gamma\braces*{\Delta\braces*{}, \UseLasy\mdsmwhtsquare A} }
    \infer[dashed]1[(weak)] { \Gamma\braces*{\Delta\braces*{A\cramped{\sp\bot}}, \UseLasy\mdsmwhtsquare A} }
    \infer2[(cut)\refstepcounter{mark}\ExpandArgs{c}\textsuperscript{@currentlabel}\Label{cut:1st}]  { \Gamma\braces*{\Delta\braces*{A\cramped{\sp\bot}}} }
    \infer[dashed]1[(weak)]   { \Gamma\braces*{\Delta\braces*{\UseLasy\mdwhtdiamond A\cramped{\sp\bot}, A\cramped{\sp\bot}}} }
    \begingroup
      \set{rule style=no rule, template=}
      \begingroup
        \set{rule margin=0pt}
        \hypo{}
        \ellipsis{\small$\mathscr{D}_2$}{}
      \endgroup
    \endgroup
    \infer[no rule]1{ \Gamma\braces*{\Delta\braces*{}, \left\lbrack \UseLasy\mdwhtdiamond A\cramped{\sp\bot}, A \right\rbrack} }
    \infer[dashed]1[(rebase)] { \Gamma\braces*{\Delta\braces*{\UseLasy\mdwhtdiamond A\cramped{\sp\bot}, A}} }
    \infer2[(cut)\refstepcounter{mark}\ExpandArgs{c}\textsuperscript{@currentlabel}\Label{cut:2nd}]    { \Gamma\braces*{\Delta\braces*{\UseLasy\mdwhtdiamond A\cramped{\sp\bot}}} }
    \begingroup
      \set{rule style=no rule, template=}
      \begingroup
        \set{rule margin=0pt}
        \hypo{}
        \ellipsis{\small$\mathscr{D}_2$}{}
      \endgroup
    \endgroup
    \infer[no rule]1{ \Gamma\braces*{\Delta\braces*{}, \left\lbrack \UseLasy\mdwhtdiamond A\cramped{\sp\bot}, A \right\rbrack} }
    \infer[dashed]1[(weak)]   { \Gamma\braces*{\Delta\braces*{\lbrack\rbrack}, \left\lbrack \UseLasy\mdwhtdiamond A\cramped{\sp\bot}, A \right\rbrack} }
    \infer[dashed]1[(rebase)] { \Gamma\braces*{\Delta\braces*{\left\lbrack \UseLasy\mdwhtdiamond A\cramped{\sp\bot}, A \right\rbrack}} }
    \infer1[($\UseLasy\mdsmwhtsquare$)] { \Gamma\braces*{\Delta\braces*{\UseLasy\mdsmwhtsquare A}} }
    \set{separation=1em}
    \infer2[(cut)\refstepcounter{mark}\ExpandArgs{c}\textsuperscript{@currentlabel}\Label{cut:3rd}]    { \Gamma\braces*{\Delta\braces*{}} }
  \end{prooftree}
  \label{reduction:beta}
\end{gather}
The first cut\footref{cut:1st} is admissible
because of the smaller derivation of the left premise, and
so is the second cut\footref{cut:2nd}
because of the smaller size of the cut-formula, but
neither is small for the third cut\footref{cut:3rd}.

\paragraph*{Naïve Attempt.}

Although the cut-formula stays the same,
it can be seen that on the third cut, compared to the original,
the cut-formula~$\UseLasy\mdwhtdiamond A\cramped{\sp\bot}$ has moved by the depth of $\Delta\braces{\mathord{-}}$ toward the leaves
of the tree represented by the sequent $\Gamma\braces{\Delta\braces{}}$.
So one might think that the reduction could be justified
by appealing to the remaining distance to the leaves, namely,
by induction on the following lexicographic ordering:
\begin{enumerate}[(i)]
  \item
    The size of the cut-formula;
  \item\label{cond:step}
    The maximum number of steps required for the cut-formula to reach a leaf;
    and
  \item\label{cond:total-height}
    The total height of the premise derivations.
\end{enumerate}
This approach, unfortunately, does not work as expected:
the well-foundedness of the ordering requires that a tree not grow its branches
due to the reductions admitted by condition~\eqref{cond:total-height},
which is in fact not true in the following case:
\begin{gather}
  \SwapAboveDisplaySkip
  \ebproofset{center=false}
  \begin{gathered}
    \begin{prooftree}
      \small
      \begingroup
        \set{rule style=no rule, template=}
        \begingroup
          \set{rule margin=0pt}
          \hypo{}
          \ellipsis{\small$\mathscr{D}_1$}{}
        \endgroup
      \endgroup
      \infer[no rule]1{ \Gamma\braces*{\left\lbrack \UseLasy\mdwhtdiamond A\cramped{\sp\bot}, A \right\rbrack}\braces*{B} }
      \infer1[($\UseLasy\mdsmwhtsquare$)] { \Gamma\braces*{\UseLasy\mdsmwhtsquare A}\braces*{B} }
      \def\RuleName{$\UseLasy\mdsmwhtsquare$}
      \rewrite{%
        \raisebox\baselineskip{%
          \llap{%
            \ExpandArgs{c}\let{@currentlabelname}\RuleName%
            \refstepcounter{rule}\Label{tree:com:left}%
          }%
        }%
        \box\treebox
      }
      \begingroup
        \set{rule style=no rule, template=}
        \begingroup
          \set{rule margin=0pt}
          \hypo{}
          \ellipsis{\small$\mathscr{D}_2$}{}
        \endgroup
      \endgroup
      \infer[no rule]1{ \Gamma\braces*{\UseLasy\mdsmwhtsquare A}\braces*{B\cramped{\sp\bot}} }
      \infer2[(cut)] { \Gamma\braces*{\UseLasy\mdsmwhtsquare A}\braces*{} }
      \def\RuleName{cut}
      \rewrite{%
        \raisebox\baselineskip{%
          \llap{%
            \ExpandArgs{c}\let{@currentlabelname}\RuleName%
            \refstepcounter{rule}\Label{tree:com}%
          }%
        }%
        \box\treebox
      }
    \end{prooftree}
    \nolinebreak\quad
    \tikz[baseline={(0,-\baselineskip)}] \pic[anchor=center]{leadsto};%
    \nolinebreak\hskip0.6667em\relax
    \begin{prooftree}
      \small
      \begingroup
        \set{rule style=no rule, template=}
        \begingroup
          \set{rule margin=0pt}
          \hypo{}
          \ellipsis{\small$\mathscr{D}_1$}{}
        \endgroup
      \endgroup
      \infer[no rule]1{ \Gamma\braces*{\left\lbrack \UseLasy\mdwhtdiamond A\cramped{\sp\bot}, A \right\rbrack}\braces*{B} }
      \begingroup
        \set{rule style=no rule, template=}
        \begingroup
          \set{rule margin=0pt}
          \hypo{}
          \ellipsis{\small$\mathscr{D}_2$}{}
        \endgroup
      \endgroup
      \infer[no rule]1{ \Gamma\braces*{\UseLasy\mdsmwhtsquare A}\braces*{B\cramped{\sp\bot}} }
      \infer[dashed]1[($\UseLasy\mdsmwhtsquare$)\textsuperscript{\textminus1}]
        { \Gamma\braces*{\left\lbrack \UseLasy\mdwhtdiamond A\cramped{\sp\bot}, A \right\rbrack}\braces*{B\cramped{\sp\bot}} }
      \infer2[(cut)]   { \Gamma\braces*{\left\lbrack \UseLasy\mdwhtdiamond A\cramped{\sp\bot}, A \right\rbrack}\braces*{} }
      \infer1[($\UseLasy\mdsmwhtsquare$)]   { \Gamma\braces*{\UseLasy\mdsmwhtsquare A}\braces*{} }
    \end{prooftree}
  \end{gathered}
  \label{reduction:com}
\end{gather}
This permutation exposes the~$\brackets{\mathord{-}}$ previously
discharged by~\labelcref{tree:com:left} in the left premise,
potentially increasing the measure~\eqref{cond:step}\footnote{
  For instance, take $\Gamma\braces{\cramped{\mathord{-}\sb{1}}}\braces{\cramped{\mathord{-}\sb{2}}} \equiv \braces{\cramped{\mathord{-}\sb{1}}}, \braces{\cramped{\mathord{-}\sb{2}}}$. Then,
  on the left-hand side the cut-formula~$B$ is already on the leaf, but
  on the right-hand side it can go one step ahead.
}.
Even were we to address this case by considering the number of~$\UseLasy\mdsmwhtsquare$'s as well
as of~$\brackets{\mathord{-}}$'s in~\eqref{cond:step}, it would impose a strong restriction
on weakening, thus breaking the argument in other cases.

\paragraph*{Poggiolesi's Approach.}

Poggiolesi~\cite{poggiolesi2009purely} proposed a similar triple-induction proof using
the notion of \emph{position} instead of the measure~\eqref{cond:step}.
The position is, in brief, a variant of~\eqref{cond:step} that
estimates the maximum number of steps by considering not just the end-sequent
but also all sequents appearing in a derivation,
whereby the reduction~\eqref{reduction:com} is no longer a problem.
However, this rather causes trouble with the third cut mentioned above.
More specifically, Poggiolesi's admissible
rule~(\smash{\~4})~\cite[Lemma~4.10]{poggiolesi2009purely},
analogous to our \cref[noun]{rule:rebase},
is used to move a subtree upwards on reduction
\cite[Lemma~4.26-Case~3.2-4(a)]{poggiolesi2009purely},
which can cause an increase in position
since rewriting subderivations affects the position on the end-sequent.
Such an operation is essential for
the interaction between $\UseLasy\mdwhtdiamond$ and $\UseLasy\mdsmwhtsquare$, and
the transitive property makes the trouble unavoidable.

Poggiolesi's approach
basically follows the work by Negri~\cite{negri2005proof},
presented a proof based on \emph{labeled sequents}
with an additional parameter called \emph{range},
similar to the position but defined in terms of labels.
Both position and range attempt to capture the well-foundedness of $\logic{GL}$-models
by means of their sequent structures, but there are crucial differences.
Negri used
\emph{label substitution}~\cite[Lemma~4.3]{negri2005proof}
to achieve the required transformation without increasing range, which
makes the triple-induction proof effective.
Here, it takes advantage of the fact that
a substitution yields a \emph{graph} structure rather than a tree,
and precisely for this reason,
such an operation cannot be fully reproduced in nested sequents.
Poggiolesi seems to have overlooked this point,
and consequently, without filling this gap, the argument
would be inadequate to simulate Negri's method.

\paragraph*{Our Approach.}

As in the case of Valentini,
a more detailed analysis might make up for this piece,
but contrary to Poggiolesi's expectation,
even in nested sequents it is not so obvious how to make triple induction work.
In addition, adding a third induction parameter is annoying
since it has a relatively broad impact on the overall induction, which
induces oversights in some boring cases such as permutations.
The triple-induction approach is not so ideal for these reasons, and
a more reliable method
based on more intuitive and purely syntactic concept is desirable.

The reason why the problematic third cut is necessary
is to eliminate the diagonal formula~$\UseLasy\mdwhtdiamond A\cramped{\sp\bot}$ in the premise
of~\labelcref*{rule:box};
if we could do that in any other way, then the problem should be resolved.
Kushida~\cite{kushida2020proof} showed that it is indeed possible
by relying on cuts only on~$A$ (notice, not on~$\UseLasy\mdwhtdiamond A\cramped{\sp\bot}$),
motivated by a syntactic cut-elimination proof for
provability logic~$\logic S$~\cite{solovay1976provability} in ordinary sequents.

Let us review the basic idea of Kushida~\cite{kushida2020proof},
but in the form of nested sequents.
Suppose $\Gamma\braces[\big]{\brackets{\UseLasy\mdwhtdiamond A\cramped{\sp\bot}, A}}$ is cut-free provable with a derivation~$\mathscr{D}$,
and consider dropping $\UseLasy\mdwhtdiamond A\cramped{\sp\bot}$ to obtain $\Gamma\braces{\lbrack A \rbrack}$.
If $\UseLasy\mdwhtdiamond A\cramped{\sp\bot}$ is not used in~$\mathscr{D}$ at all, then we can remove it
from all initial sequents of~$\mathscr{D}$ to obtain a derivation of~$\Gamma\braces{\lbrack A \rbrack}$.
Otherwise, there must be a pair of
relevant applications of~\labelcref{tree:orig:dia,tree:orig:box} in~$\mathscr{D}$,
as shown in \cref{fig:orig}.
\begin{figure}[t]
  \ebproofset{right label template=\rlap{\normalfont\inserttext}}
  \begin{subcaptionblock}[t]{0.45\textwidth}\centering
    \begin{prooftree}
      \begingroup
        \set{rule style=no rule, template=}
        \begingroup
          \set{rule margin=0pt}
          \hypo{}
          \ellipsis{\small$\mathscr{D}_3$}{}
        \endgroup
      \endgroup
      \infer[no rule]1{ \Gamma''\braces*{ \brackets*{\UseLasy\mdwhtdiamond A\cramped{\sp\bot}, \Delta'\braces[\big]{\brackets*{\UseLasy\mdwhtdiamond B\cramped{\sp\bot}, \Theta, A\cramped{\sp\bot}}}} } }
      \infer1[($\UseLasy\mdwhtdiamond$)\refstepcounter{mark}\ExpandArgs{c}\textsuperscript{@currentlabel}\Label{mark:orig:dia}]
        { \Gamma''\braces*{\brackets*{\UseLasy\mdwhtdiamond A\cramped{\sp\bot}, \Delta'\braces[\big]{\brackets*{\UseLasy\mdwhtdiamond B\cramped{\sp\bot}, \Theta}}}} }
        \def\RuleName{$\UseLasy\mdwhtdiamond$}
        \rewrite{%
          \raisebox\baselineskip{%
            \llap{%
              \ExpandArgs{c}\let{@currentlabelname}\RuleName%
              \refstepcounter{rule}\Label{tree:orig:dia}%
            }%
          }%
          \box\treebox
        }
      \ellipsis{\small$\mathscr{D}_2$} { \Gamma'\braces*{\brackets[\big]{\UseLasy\mdwhtdiamond A\cramped{\sp\bot}, \Delta\braces*{\brackets*{\UseLasy\mdwhtdiamond B\cramped{\sp\bot}, B}}}} }
      \infer1[($\UseLasy\mdsmwhtsquare$)\refstepcounter{mark}\ExpandArgs{c}\textsuperscript{@currentlabel}\Label{mark:orig:box}]
        { \Gamma'\braces[\big]{\brackets*{\UseLasy\mdwhtdiamond A\cramped{\sp\bot}, \Delta\braces*{\UseLasy\mdsmwhtsquare B}}} }
        \def\RuleName{$\UseLasy\mdsmwhtsquare$}
        \rewrite{%
          \raisebox\baselineskip{%
            \llap{%
              \ExpandArgs{c}\let{@currentlabelname}\RuleName%
              \refstepcounter{rule}\Label{tree:orig:box}%
            }%
          }%
          \box\treebox
        }
      \ellipsis{\small$\mathscr{D}_1$} { \Gamma\braces[\big]{\brackets*{\UseLasy\mdwhtdiamond A\cramped{\sp\bot}, A}} }
    \end{prooftree}
    \caption{Original derivation~$\mathscr{D}$.}\label{fig:orig}
  \end{subcaptionblock}\allowbreak
  \hfill
  \raisebox{3\baselineskip}{\begin{subcaptionblock}[t]{0.5\textwidth}\centering\ignorespaces
    \bigskip
    \begin{prooftree}
      \infer[dashed]0[(id)]   { \Gamma'\braces*{\brackets[\big]{\UseLasy\mdwhtdiamond B\cramped{\sp\bot}, \Delta\braces*{\brackets*{\UseLasy\mdwhtdiamond B\cramped{\sp\bot}, B, B\cramped{\sp\bot}}}}} }
      \infer1[($\UseLasy\mdwhtdiamond$)]   { \Gamma'\braces*{\brackets[\big]{\UseLasy\mdwhtdiamond B\cramped{\sp\bot}, \Delta\braces*{\brackets*{\UseLasy\mdwhtdiamond B\cramped{\sp\bot}, B}}}} }
      \infer1[($\UseLasy\mdsmwhtsquare$)]   { \Gamma'\braces*{\brackets[\big]{\UseLasy\mdwhtdiamond B\cramped{\sp\bot}, \Delta\braces*{\UseLasy\mdsmwhtsquare B}}} }
      \ellipsis{\small$\mathscr{D}_1'$} { \Gamma\braces[\big]{\brackets*{\UseLasy\mdwhtdiamond B\cramped{\sp\bot}, A}} }
    \end{prooftree}
    \caption{
      First step: Truncate the derivation
      above the application of~\mbox{\labelcref{tree:orig:box}}
      by adding~$\UseLasy\mdwhtdiamond B\cramped{\sp\bot}$ as an assumption.\label{fig:step:1st}
    }
  \end{subcaptionblock}}\bigskip
  
  \begin{subcaptionblock}[t]{\textwidth}\centering
    \begin{prooftree}
      \begingroup
        \set{rule style=no rule, template=}
        \begingroup
          \set{rule margin=0pt}
          \hypo{}
          \ellipsis{\small$\mathscr{D}_3$}{}
        \endgroup
      \endgroup
      \infer[no rule]1{ \Gamma''\braces*{ \brackets*{\UseLasy\mdwhtdiamond A\cramped{\sp\bot}, \Delta'\braces[\big]{\brackets*{\UseLasy\mdwhtdiamond B\cramped{\sp\bot}, \Theta, A\cramped{\sp\bot}}}} } }
      \begingroup
        \set{rule style=no rule, template=}
        \begingroup
          \set{rule margin=0pt}
          \hypo{}
          \ellipsis{\small\subref{fig:step:1st}}{}
        \endgroup
      \endgroup
      \infer[no rule]1{ \Gamma\braces[\big]{\brackets*{\UseLasy\mdwhtdiamond B\cramped{\sp\bot}, A}} }
      \infer[dashed]1[(weak)]   { \Gamma\braces*{\brackets*{\UseLasy\mdwhtdiamond B\cramped{\sp\bot}, A}, \brackets*{\UseLasy\mdwhtdiamond A\cramped{\sp\bot}, \Delta'\braces[\big]{\brackets*{\Theta}}}} }
      \infer[dashed]1[(rebase)] { \Gamma\braces*{\brackets*{\UseLasy\mdwhtdiamond A\cramped{\sp\bot}, \Delta'\braces[\big]{\brackets*{\UseLasy\mdwhtdiamond B\cramped{\sp\bot}, \Theta, A}}}} }
      \infer2[(cut)]    { \Gamma'''\braces*{\brackets*{\UseLasy\mdwhtdiamond A\cramped{\sp\bot}, \Delta'\braces[\big]{\brackets*{\UseLasy\mdwhtdiamond B\cramped{\sp\bot}, \Theta}}}} }
      \ellipsis{\small$\mathscr{D}_2' + \text{\labelcref*{tree:orig:box}} + \mathscr{D}_1''$}
        { \Gamma\braces[\big]{\brackets*{\UseLasy\mdwhtdiamond A\cramped{\sp\bot}, A}} }
    \end{prooftree}
    \caption{
      Second step:
      Displace the application of~\mbox{\labelcref{tree:orig:dia}}
      with a cut on~$A$.\label{fig:step:2nd}
    }
  \end{subcaptionblock}
  \par\medskip
  \begin{small}
    Here $\mathscr{D}_1'$, $\mathscr{D}_1''$, and $\mathscr{D}_2'$ denote
    minor modifications of $\mathscr{D}_1$, $\mathscr{D}_1$, and $\mathscr{D}_2$
    with admissible rules applied several times, respectively.
  \end{small}
  \caption{
    Overview of the diagonal-formula-elimination subprocedure.
    \label{fig:overview}
  }
\end{figure}
For simplicity, assume that \cref*{rule:dia} is not applied
with the~$\UseLasy\mdwhtdiamond A\cramped{\sp\bot}$ in~$\mathscr{D}_1$, $\mathscr{D}_2$, and~$\mathscr{D}_3$.
Then, to obtain~$\Gamma\braces{\lbrack A \rbrack}$, we need to erase the assumption~$A\cramped{\sp\bot}$
in the premise of~\labelcref*{tree:orig:dia}\footref{mark:orig:dia}
without using~\labelcref*{tree:orig:dia}.

Next, we consider how to erase $A\cramped{\sp\bot}$. This is done in the following steps.
First, truncate $\mathscr{D}$
above the application of~\labelcref*{tree:orig:box}\footref{mark:orig:box},
including the use of~\labelcref{tree:orig:dia},
by adding $\UseLasy\mdwhtdiamond B\cramped{\sp\bot}$ as an assumption~(\cref{fig:step:1st}).
Then return to the original~$\mathscr{D}$ and erase $A\cramped{\sp\bot}$
by a cut~(\cref{fig:step:2nd}), where many applications of admissible rules are
required to adjust the shape of sequents.
The added $\UseLasy\mdwhtdiamond B\cramped{\sp\bot}$ is dealt with as a diagonal formula and
does not remain in the conclusion.
The resulting derivation no longer requires the~$\UseLasy\mdwhtdiamond A\cramped{\sp\bot}$,
allowing us to obtain $\Gamma\braces{\lbrack A \rbrack}$.

This is the base case of our rewriting process, and in general,
it can be done by repeating this as many times as necessary.
However, it requires global manipulation of the derivation and
several tweaks of sub-derivations by admissible rules,
making precise discussion difficult.
In addition, it may seem a bit counterintuitive that
even if only \emph{two} instances of \cref*{tree:orig:dia} are involved,
at most \emph{three} cuts are needed.
To avoid pitfalls, we introduce \emph{annotations} in the next section
to allow for more precise arguments.

\section{Annotated System}\label{sec:annotation}

In this section, we introduce an auxiliary calculus with tiny annotations, which
keep track of the use of diagonal formulae in a derivation.

An \emph{annotated sequent} is defined by the following grammar:
\[
  \Gamma, \Delta \Coloneq \left.
    \mkern\medmuskip\mathord\cdot\vphantom{x}\mkern\medmuskip
  \mathrel{} \mathclose{} \big \vert \mathopen{} \mathrel{}
    \Gamma, C
  \mathrel{} \mathclose{} \big \vert \mathopen{} \mathrel{}
    \Gamma, \UseLasy\mdwhtdiamond A \sb{\Sigma}
  \mathrel{} \mathclose{} \big \vert \mathopen{} \mathrel{}
    \Gamma, \brackets{\Delta}_B
    \mkern1mu\text,
  \right.
\]
where $C$ is a formula not of the form~$\UseLasy\mdwhtdiamond A$, and $B$ and~$\Sigma$ are
a formula and a set of formulae, respectively.
Accordingly, there are two sorts of annotations, each with the following roles:
\begin{itemize}
  \item
    $\brackets{\mathord{-}}_B$ indicates that the~$\brackets{\mathord{-}}$ is to be discharged
    by applying \cref*{rule:box} to~$B$:
    \begin{prooftree*}
      \hypo    { \Gamma\braces[\big]{\brackets*{\UseLasy\mdwhtdiamond B\cramped{\sp\bot}, B}\cramped{_B}} }
      \infer1[($\UseLasy\mdsmwhtsquare$)] { \Gamma\braces[\big]{\UseLasy\mdsmwhtsquare B} }
    \end{prooftree*}
  \item
    The set~$\Sigma$ of~$\UseLasy\mdwhtdiamond A \Sb\Sigma$ records the provenances of $A$.
    That is, $B \in \Sigma$ implies that we have used \cref*{rule:dia} to~$A$
    directly contained inside some~$\brackets{\mathord{-}}_B$, absorbing it into the~$\UseLasy\mdwhtdiamond A$:
    \begin{prooftree*}
      \hypo { \Gamma\braces*{\Delta\braces[\big]{\brackets*{\Delta', A}\cramped{_B}}, \UseLasy\mdwhtdiamond A \cramped{\sb{\Sigma}} }}
      \infer1[($\UseLasy\mdwhtdiamond$)] { \Gamma\braces*{\Delta\braces[\big]{\brackets{\Delta'}\cramped{_B}}, \UseLasy{\mdwhtdiamond} A \cramped{\sb{\Sigma \medspace\mathord\cup\medspace\braces B}} }}
      \def\RuleName{$\UseLasy\mdwhtdiamond$}
      \rewrite{%
        \raisebox\baselineskip{%
          \llap{%
            \ExpandArgs{c}\let{@currentlabelname}\RuleName%
            \refstepcounter{rule}\Label{rule:annotated:dia}%
          }%
        }%
        \box\treebox
      }
    \end{prooftree*}
\end{itemize}
This is all our annotations do, and we shall see in the next section that
they do indeed provide sufficient information for induction.

To put it a little more strictly, for annotations to make sense,
we require the following conditions be placed on the inference rules:
\begin{itemize}
  \item
    An initial sequent shall contain only emptysets
    as an annotation to $\UseLasy\mdwhtdiamond$-formulae, since
    \cref*{rule:dia} has not yet been applied here.
    \begin{mathpar}
      \prescript{\raisebox{-1ex}{\hbox{\text{\small\faCheck}}}\enskip}{}{
        \begin{prooftree}
          \infer0[(id)] { \UseLasy\mdwhtdiamond A \Sb\emptyset, \brackets*{\alpha, \alpha\cramped{\sp\bot}, \brackets*{\UseLasy\mdwhtdiamond B \land C}\cramped{_{\alpha}}}\cramped{_{\beta}} }
        \end{prooftree}
      }
      \and
      \prescript{\raisebox{-1ex}{\hbox{\text{\faTimes}}}\enskip}{}{
        \begin{prooftree}
          \infer0[(id)] { \UseLasy\mdwhtdiamond A \Sb{\braces{\alpha}}, \brackets*{\alpha, \alpha\cramped{\sp\bot}, \brackets*{A}\cramped{_{\alpha}}}\cramped{_{\beta}} }
        \end{prooftree}
      }
    \end{mathpar}
  \item
    Even if A comes out of multiple $\brackets{\mathord{-}}$'s by \cref*{rule:dia},
    only the innermost one is essential.
    \begin{mathpar}
      \prescript{\raisebox{-1ex}{\hbox{\text{\small\faCheck}}}\enskip}{}{
        \begin{prooftree}
          \hypo    { \brackets[\big]{\brackets*{\Delta, A}\cramped{_{\alpha}}}\cramped{_{\beta}}, \UseLasy\mdwhtdiamond A \Sb{\Sigma} }
          \infer1[($\UseLasy\mdwhtdiamond$)] { \brackets[\big]{\brackets*{\Delta}\cramped{_{\alpha}}}\cramped{_{\beta}}, \UseLasy\mdwhtdiamond A \Sb{\Sigma \medspace\mathord\cup\medspace \braces{\alpha}} }
        \end{prooftree}
      }
      \and
      \prescript{\raisebox{-1ex}{\hbox{\text{\faTimes}}}\enskip}{}{
        \begin{prooftree}
          \hypo { \brackets[\big]{\brackets*{\Delta, A}\cramped{_{\alpha}}}\cramped{_{\beta}}, \UseLasy\mdwhtdiamond A \Sb{\Sigma} }
          \infer1[($\UseLasy\mdwhtdiamond$)]
            { \brackets[\big]{\brackets*{\Delta}\cramped{_{\alpha}}}\cramped{_{\beta}}, \UseLasy\mdwhtdiamond A \Sb{\Sigma \medspace\mathord\cup\medspace \braces{\alpha,\,\beta}} }
        \end{prooftree}
      }
      \and
      \prescript{\raisebox{-1ex}{\hbox{\text{\faTimes}}}\enskip}{}{
        \begin{prooftree}
          \hypo    { \brackets[\big]{\brackets*{\Delta, A}\cramped{_{\alpha}}}\cramped{_{\beta}}, \UseLasy\mdwhtdiamond A \Sb{\Sigma} }
          \infer1[($\UseLasy\mdwhtdiamond$)] { \brackets[\big]{\brackets*{\Delta}\cramped{_{\alpha}}}\cramped{_{\beta}}, \UseLasy\mdwhtdiamond A \Sb{\Sigma \medspace\mathord\cup\medspace \braces{\beta}} }
        \end{prooftree}
      }
    \end{mathpar}
  \item
    We never use annotations for subformulae\footnote{
      This is because in those cases the induction works just fine
      due to the smaller formula sizes.
    },
    so whenever logical rules are applied to $\UseLasy\mdwhtdiamond$-formulae,
    their annotations are simply discarded.
    \begin{mathpar}
      \prescript{\raisebox{-1ex}{\hbox{\text{\small\faCheck}}}\enskip}{}{
        \begin{prooftree}
          \hypo   { \UseLasy\mdwhtdiamond A \Sb{\Sigma}, \UseLasy\mdwhtdiamond B \Sb{\Pi} }
          \infer1[($\lor$)] { \UseLasy\mdwhtdiamond A \lor \UseLasy\mdwhtdiamond B }
        \end{prooftree}
      }
      \and
      \prescript{\raisebox{-1ex}{\hbox{\text{\faTimes}}}\enskip}{}{
        \begin{prooftree}
          \hypo   { \UseLasy\mdwhtdiamond A \Sb{\Sigma}, \UseLasy\mdwhtdiamond B \Sb{\Pi} }
          \infer1[($\lor$)] { \UseLasy\mdwhtdiamond A \Sb{\Sigma} \lor \UseLasy\mdwhtdiamond B \Sb{\Pi} }
        \end{prooftree}
      }
      \and
      \prescript{\raisebox{-1ex}{\hbox{\text{\faTimes}}}\enskip}{}{
        \begin{prooftree}
          \hypo    { \UseLasy\mdwhtdiamond A \Sb{\Sigma} }
          \hypo    { \UseLasy\mdwhtdiamond B \Sb{\Pi} }
          \infer2[($\land$)] { (\UseLasy\mdwhtdiamond A \land \UseLasy\mdwhtdiamond B) \Sb{\Sigma \medspace\mathord\cup\medspace \Pi} }
        \end{prooftree}
      }
      \\
      \prescript{\raisebox{-1ex}{\hbox{\text{\small\faCheck}}}\enskip}{}{
        \begin{prooftree}
          \hypo
            { \brackets[\big]{\Delta, \UseLasy\mdwhtdiamond A \Sb{\Sigma}}\cramped{_{\alpha}}, \UseLasy\mdwhtdiamond\UseLasy\mdwhtdiamond A \Sb{\Pi} }
          \infer1[($\UseLasy\mdwhtdiamond$)] { \brackets*{\Delta}\cramped{_{\alpha}}, \UseLasy\mdwhtdiamond \UseLasy\mdwhtdiamond A \Sb{\Pi \medspace\mathord\cup\medspace \braces{\alpha}} }
        \end{prooftree}
      }
      \and
      \prescript{\raisebox{-1ex}{\hbox{\text{\faTimes}}}\enskip}{}{
        \begin{prooftree}
          \hypo
            { \brackets[\big]{\Delta, \UseLasy\mdwhtdiamond A \Sb{\Sigma}}\cramped{_{\alpha}}, \UseLasy\mdwhtdiamond\UseLasy\mdwhtdiamond A \Sb{\Pi} }
          \infer1[($\UseLasy\mdwhtdiamond$)] { \brackets*{\Delta}\cramped{_{\alpha}}, \UseLasy\mdwhtdiamond \UseLasy\mdwhtdiamond A \Sb{\Sigma \medspace\mathord\cup\medspace \Pi \medspace\mathord\cup\medspace \braces{\alpha}} }
        \end{prooftree}
      }
    \end{mathpar}
  \item
    For the two-premise rules~(i.e.\@,~\labelcref*{rule:and,rule:cut}):
    \begin{itemize}
      \item
        For each~$\brackets{\mathord{-}}$, its annotation formula shall be
        \emph{shared} by both premises; whereas\samepage
      \item
        For each $\UseLasy\mdwhtdiamond$-formula,
        its annotation set may be \emph{independent} on two premises,
        and in the conclusion two sets are to be merged by set union~$\cup$.
    \end{itemize}
    \begin{mathpar}
      \prescript{\raisebox{-1ex}{\hbox{\text{\small\faCheck}}}\enskip}{}{
        \begin{prooftree}[rule separation=2em]
          \hypo    {A, \UseLasy\mdwhtdiamond C \Sb{\braces{E, F}}, \UseLasy\mdwhtdiamond D\cramped{\sp\bot} \Sb\emptyset}
          \hypo    {B, \UseLasy\mdwhtdiamond C \Sb{\braces{E}}, \UseLasy\mdwhtdiamond D\cramped{\sp\bot} \Sb{\braces{G}}}
          \infer2[($\land$)] {A \land B, \UseLasy\mdwhtdiamond C \Sb{\braces{E, F}}, \UseLasy\mdwhtdiamond D\cramped{\sp\bot} \Sb{\braces{G}}}
        \end{prooftree}
      }
    \end{mathpar}
    Since exchange may be used implicitly, it is not always possible to
    uniquely determined which formula of each premise is paired, but
    in that case anything is ok.
\end{itemize}
We may omit annotations if not important, and
note that a mere $\UseLasy\mdwhtdiamond A\cramped{\sp\bot}$ does not imply $\UseLasy\mdwhtdiamond A\cramped{\sp\bot} \Sb\emptyset$,
but rather $\UseLasy\mdwhtdiamond A\cramped{\sp\bot} \Sb{\Sigma}$ for some~(possibly empty)~$\Sigma$.

\begin{example}[The axiom~(K)]
  A proof of
  \begin{math}
    \UseLasy\mdsmwhtsquare (\alpha \to \beta) \to \UseLasy\mdsmwhtsquare \alpha \to \UseLasy\mdsmwhtsquare \beta \equiv
    \UseLasy\mdwhtdiamond \parens{\alpha \land \beta\cramped{\sp\bot}} \lor \parens{\UseLasy\mdwhtdiamond \alpha\cramped{\sp\bot} \lor \UseLasy\mdsmwhtsquare \beta}
  \end{math}
  is as follows:
  \begin{prooftree*}[strut=1.1]
    \small
    \infer0[(id)]  { \UseLasy\mdwhtdiamond \parens{\alpha \land \beta\cramped{\sp\bot}} \Sb\emptyset, \UseLasy\mdwhtdiamond \alpha\cramped{\sp\bot} \Sb\emptyset, \brackets[\big]{\UseLasy\mdwhtdiamond \beta\cramped{\sp\bot} \Sb\emptyset, \beta, \alpha\cramped{\sp\bot}, \alpha}\cramped{_{\beta}} }
    \infer1[($\UseLasy\mdwhtdiamond$)] { \UseLasy\mdwhtdiamond \parens{\alpha \land \beta\cramped{\sp\bot}} \Sb\emptyset, \UseLasy\mdwhtdiamond \alpha\cramped{\sp\bot} \Sb{\braces{\beta}}, \brackets[\big]{\UseLasy\mdwhtdiamond \beta\cramped{\sp\bot} \Sb\emptyset, \beta, \alpha}\cramped{_{\beta}} }
    \infer0[(id)]  { \UseLasy\mdwhtdiamond \parens{\alpha \land \beta\cramped{\sp\bot}} \Sb\emptyset, \UseLasy\mdwhtdiamond \alpha\cramped{\sp\bot} \Sb\emptyset, \brackets[\big]{\UseLasy\mdwhtdiamond \beta\cramped{\sp\bot} \Sb\emptyset, \beta, \beta\cramped{\sp\bot}}\cramped{_{\beta}} }
    \infer2[($\land$)] { \UseLasy\mdwhtdiamond \parens{\alpha \land \beta\cramped{\sp\bot}} \Sb\emptyset, \UseLasy\mdwhtdiamond \alpha\cramped{\sp\bot} \Sb{\braces{\beta}}, \brackets[\big]{\UseLasy\mdwhtdiamond \beta\cramped{\sp\bot} \Sb\emptyset, \beta, \alpha \land \beta\cramped{\sp\bot}}\cramped{_{\beta}} }
    \infer1[($\UseLasy\mdwhtdiamond$)] { \UseLasy\mdwhtdiamond \parens{\alpha \land \beta\cramped{\sp\bot}} \Sb{\braces{\beta}}, \UseLasy\mdwhtdiamond \alpha\cramped{\sp\bot} \Sb{\braces{\beta}}, \brackets[\big]{\UseLasy\mdwhtdiamond \beta\cramped{\sp\bot} \Sb\emptyset, \beta}\cramped{_{\beta}} }
    \infer1[($\UseLasy\mdsmwhtsquare$)] { \UseLasy\mdwhtdiamond \parens{\alpha \land \beta\cramped{\sp\bot}} \Sb{\braces{\beta}}, \UseLasy\mdwhtdiamond \alpha\cramped{\sp\bot} \Sb{\braces{\beta}}, \UseLasy\mdsmwhtsquare \beta }
    \infer1[($\lor$)]  { \UseLasy\mdwhtdiamond \parens{\alpha \land \beta\cramped{\sp\bot}} \Sb{\braces{\beta}}, \UseLasy\mdwhtdiamond \alpha\cramped{\sp\bot} \lor \UseLasy\mdsmwhtsquare \beta }
    \infer1[($\lor$)]  { \UseLasy\mdwhtdiamond \parens{\alpha \land \beta\cramped{\sp\bot}} \lor \parens{\UseLasy\mdwhtdiamond \alpha\cramped{\sp\bot} \lor \UseLasy\mdsmwhtsquare \beta} }
    \AddToHookNext{env/equation*/end}{\qedhere}
  \end{prooftree*}
\end{example}

We summarize the annotated system in \cref{fig:annotated}.
\begin{figure}[t]
  \begin{mathpar}
    \begin{prooftree}
      \hypo \mathstrut
      \infer1[(id)] { \Gamma\braces*{\alpha\cramped{\sp\bot}, \alpha} }
    \end{prooftree}
    \footref{cond:initial} \and
    \begin{prooftree}
      \hypo    { \Gamma\braces*{A} }
      \hypo    { \Gamma\braces*{B} }
      \infer2[($\land$)] { \Gamma\braces*{A \land B} }
    \end{prooftree}
    \multfootref{cond:subformula;cond:context} \and
    \begin{prooftree}
      \hypo   { \Gamma\braces*{A, B} }
      \infer1[($\lor$)] { \Gamma\braces*{A \lor B} }
    \end{prooftree}
    \footref{cond:subformula}
    \\
    \begin{prooftree}
      \hypo    { \Gamma\braces[\big]{\brackets*{\UseLasy\mdwhtdiamond A\cramped{\sp\bot} \Sb{\Sigma}, A}\cramped{_A}} }
      \infer1[($\UseLasy\mdsmwhtsquare$)] { \Gamma\braces*{\UseLasy\mdsmwhtsquare A} }
      \def\RuleName{$\UseLasy\mdsmwhtsquare$}
      \rewrite{%
        \raisebox\baselineskip{%
          \llap{%
            \ExpandArgs{c}\let{@currentlabelname}\RuleName%
            \refstepcounter{rule}\Label{rule:annotated:box}%
          }%
        }%
        \box\treebox
      }
    \end{prooftree}
    \footref{cond:subformula} \and
    \begin{prooftree}
      \hypo { \Gamma\braces*{\Delta\braces[\big]{\brackets*{\Delta', A}\cramped{_B}}, \UseLasy\mdwhtdiamond A \Sb{\Sigma}} }
      \infer1[($\UseLasy\mdwhtdiamond$)] { \Gamma\braces*{\Delta\braces[\big]{\brackets{\Delta'}\cramped{_B}}, \UseLasy\mdwhtdiamond A \Sb{\Sigma \medspace\mathord\cup\medspace\braces B}} }
    \end{prooftree}
    \footref{cond:subformula} \and
    \begin{prooftree}
      \hypo    { \Gamma\braces*{A} }
      \hypo    { \Gamma\braces*{A\cramped{\sp\bot}} }
      \infer2[(cut)] { \Gamma\braces*{} }
    \end{prooftree}
    \footref{cond:context}
  \end{mathpar}
  \begin{footnotesize}%
    \begin{tasks}[
      label-format=\alph{task}\rparen\hskip0.8\labelsep\UseName{@gobble},
      ref=\textasciiasterisk\alph*,
      item-format=
        \vskip-\baselineskip\nointerlineskip % Trim spaces
        \AddToHookNext{para/begin}{\strut}\vtop,
      after-skip=0pt,
    ](2)
      \task\label{cond:initial}
        $\Gamma\braces{\mathord{-}}$ must contain only $\emptyset$ as annotation sets.
      \task\label{cond:subformula}
        Discard the annotation(s) of $A$ (and $B$) if exist(s).
      \task*\label{cond:context}
        Each $\brackets{\mathord{-}}$ must have the same annotation for premises.
        For each $\UseLasy\mdwhtdiamond$-formula, annotation sets are merged in the conclusion.
    \end{tasks}%
  \end{footnotesize}
  \caption{Inference rules with annotations.}\label{fig:annotated}
\end{figure}
Observe that the rules in \cref{fig:annotated}, with all annotations dropped,
are exactly the same as those in \cref{fig:rules,def:cut},
including in particular \cref{rule:annotated:dia}.
Also note that annotations are not a restriction, but merely clues,
because, given an unannotated derivation,
we can always lift it straightforwardly to an annotated one;
more precisely, this is done by the following two steps:
\begin{enumerate}
  \item
    First, we need to annotate all $\brackets{\mathord{-}}$'s appropriately.
    Look at a given derivation from bottom to top, and
    if we find that \cref*{rule:box} is applied to some $A$, then
    annotate $\brackets{\mathord{-}}$ there with $A$.
    Taking the derivation of the Löb axiom~(\cpageref{example:löb}) as an example,
    there are two instances, annotated as follows:
    \begin{align*}
      \begin{prooftree}
        \begingroup
          \set{rule style=no rule, template=}
          \begingroup
            \set{rule margin=0pt}
            \hypo{}
            \ellipsis{}{}
          \endgroup
        \endgroup
        \infer[no rule]1{ \UseLasy\mdwhtdiamond \parens*{\UseLasy\mdsmwhtsquare \alpha \land \alpha\cramped{\sp\bot}}, \brackets*{\UseLasy\mdwhtdiamond \alpha\cramped{\sp\bot}, \alpha} }
        \infer1[($\UseLasy\mdsmwhtsquare$)] { \UseLasy\mdwhtdiamond \parens*{\UseLasy\mdsmwhtsquare \alpha \land \alpha\cramped{\sp\bot}}, \UseLasy\mdsmwhtsquare \alpha }
        \begingroup
          \set{rule style=no rule, template=}
          \infer1{}
          \begingroup
            \set{rule margin=0pt}
            \ellipsis{}{}
          \endgroup
        \endgroup
      \end{prooftree}
      &\quad\tikz[baseline={(0,-0.3ex)}] \pic[anchor=center]{leadsto};\quad
      \begin{prooftree}
        \begingroup
          \set{rule style=no rule, template=}
          \begingroup
            \set{rule margin=0pt}
            \hypo{}
            \ellipsis{}{}
          \endgroup
        \endgroup
        \infer[no rule]1{ \UseLasy\mdwhtdiamond \parens*{\UseLasy\mdsmwhtsquare \alpha \land \alpha\cramped{\sp\bot}}, \brackets*{\UseLasy\mdwhtdiamond \alpha\cramped{\sp\bot}, \alpha}\cramped{_{\pmb{\alpha}}} }
        \infer1[($\UseLasy\mdsmwhtsquare$)] { \UseLasy\mdwhtdiamond \parens*{\UseLasy\mdsmwhtsquare \alpha \land \alpha\cramped{\sp\bot}}, \UseLasy\mdsmwhtsquare \alpha }
        \begingroup
          \set{rule style=no rule, template=}
          \infer1{}
          \begingroup
            \set{rule margin=0pt}
            \ellipsis{}{}
          \endgroup
        \endgroup
      \end{prooftree}
      \\
      \begin{prooftree}
        \begingroup
          \set{rule style=no rule, template=}
          \begingroup
            \set{rule margin=0pt}
            \hypo{}
            \ellipsis{}{}
          \endgroup
        \endgroup
        \infer[no rule]1{ \UseLasy\mdwhtdiamond \parens*{\UseLasy\mdsmwhtsquare \alpha \land \alpha\cramped{\sp\bot}}, \brackets[\big]{\UseLasy\mdwhtdiamond \alpha\cramped{\sp\bot}, \alpha, \brackets*{\UseLasy\mdwhtdiamond \alpha\cramped{\sp\bot}, \alpha}}\cramped{_{\alpha}} }
        \infer1[($\UseLasy\mdsmwhtsquare$)] { \UseLasy\mdwhtdiamond \parens*{\UseLasy\mdsmwhtsquare \alpha \land \alpha\cramped{\sp\bot}}, \brackets*{\UseLasy\mdwhtdiamond \alpha\cramped{\sp\bot}, \alpha, \UseLasy\mdsmwhtsquare \alpha}\cramped{_{\alpha}} }
        \begingroup
          \set{rule style=no rule, template=}
          \infer1{}
          \begingroup
            \set{rule margin=0pt}
            \ellipsis{}{}
          \endgroup
        \endgroup
      \end{prooftree}
      &\quad\tikz[baseline={(0,-0.3ex)}] \pic[anchor=center]{leadsto};\quad
      \begin{prooftree}
        \begingroup
          \set{rule style=no rule, template=}
          \begingroup
            \set{rule margin=0pt}
            \hypo{}
            \ellipsis{}{}
          \endgroup
        \endgroup
        \infer[no rule]1{ \UseLasy\mdwhtdiamond \parens*{\UseLasy\mdsmwhtsquare \alpha \land \alpha\cramped{\sp\bot}}, \brackets[\big]{\UseLasy\mdwhtdiamond \alpha\cramped{\sp\bot}, \alpha, \brackets*{\UseLasy\mdwhtdiamond \alpha\cramped{\sp\bot}, \alpha}\cramped{_{\pmb{\alpha}}}}\cramped{_{\alpha}} }
        \infer1[($\UseLasy\mdsmwhtsquare$)] { \UseLasy\mdwhtdiamond \parens*{\UseLasy\mdsmwhtsquare \alpha \land \alpha\cramped{\sp\bot}}, \brackets*{\UseLasy\mdwhtdiamond \alpha\cramped{\sp\bot}, \alpha, \UseLasy\mdsmwhtsquare \alpha}\cramped{_{\alpha}} }
        \begingroup
          \set{rule style=no rule, template=}
          \infer1{}
          \begingroup
            \set{rule margin=0pt}
            \ellipsis{}{}
          \endgroup
        \endgroup
      \end{prooftree}
    \end{align*}
    If the end-sequent contains $\brackets{\mathord{-}}$'s,
    their annotations are not important and may be annotated in any way.
  \item
    Then, annotate all $\UseLasy\mdwhtdiamond$-formulae in an initial sequent with an emptyset
    and, from top to bottom, collect their usage.
    In the case of the example (\cpageref{example:löb}):
    \begin{align*}
      \begin{prooftree}
        \footnotesize
        \infer0[(id)]  { \UseLasy\mdwhtdiamond \parens*{\UseLasy\mdsmwhtsquare \alpha \land \alpha\cramped{\sp\bot}}, \brackets[\big]{\UseLasy\mdwhtdiamond \alpha\cramped{\sp\bot}, \alpha, \brackets*{\UseLasy\mdwhtdiamond \alpha\cramped{\sp\bot}, \alpha, \alpha\cramped{\sp\bot}}\cramped{_{\alpha}}}\cramped{_{\alpha}} }
        \infer1[($\UseLasy\mdwhtdiamond$)] { \UseLasy\mdwhtdiamond \parens*{\UseLasy\mdsmwhtsquare \alpha \land \alpha\cramped{\sp\bot}}, \brackets[\big]{\UseLasy\mdwhtdiamond \alpha\cramped{\sp\bot}, \alpha, \brackets*{\UseLasy\mdwhtdiamond \alpha\cramped{\sp\bot}, \alpha}\cramped{_{\alpha}}}\cramped{_{\alpha}} }
        \begingroup
          \set{rule style=no rule, template=}
          \infer1{}
          \begingroup
            \set{rule margin=0pt}
            \ellipsis{}{}
          \endgroup
        \endgroup
      \end{prooftree}
      &\quad\tikz[baseline={(0,-0.3ex)}] \pic[anchor=center]{leadsto};\quad
      \begin{prooftree}
        \footnotesize
        \infer0[(id)]  {
          \UseLasy\mdwhtdiamond \parens*{\UseLasy\mdsmwhtsquare \alpha \land \alpha\cramped{\sp\bot}} \Sb{\pmb{\emptyset}},
          \brackets[\big]{
            \UseLasy\mdwhtdiamond \alpha\cramped{\sp\bot} \Sb{\pmb{\emptyset}}, \alpha,
            \brackets*{\UseLasy\mdwhtdiamond \alpha\cramped{\sp\bot} \Sb{\pmb{\emptyset}}, \alpha, \alpha\cramped{\sp\bot}}\cramped{_{\alpha}}
          }\cramped{_{\alpha}}
        }
        \infer1[($\UseLasy\mdwhtdiamond$)] {
          \UseLasy\mdwhtdiamond \parens*{\UseLasy\mdsmwhtsquare \alpha \land \alpha\cramped{\sp\bot}} \Sb{\pmb{\emptyset}},
          \brackets[\big]{\UseLasy\mdwhtdiamond \alpha\cramped{\sp\bot} \Sb{\pmb{\braces{\alpha}}}, \alpha,
          \brackets*{\UseLasy\mdwhtdiamond \alpha\cramped{\sp\bot} \Sb{\pmb{\emptyset}}, \alpha}\cramped{_{\alpha}}}\cramped{_{\alpha}}
        }
        \begingroup
          \set{rule style=no rule, template=}
          \infer1{}
          \begingroup
            \set{rule margin=0pt}
            \ellipsis{}{}
          \endgroup
        \endgroup
      \end{prooftree}
      \\[\smallskipamount]
      \begin{prooftree}
        \small
        \begingroup
          \set{rule style=no rule, template=}
          \begingroup
            \set{rule margin=0pt}
            \hypo{}
            \ellipsis{}{}
          \endgroup
        \endgroup
        \infer[no rule]1{ \UseLasy\mdwhtdiamond \parens*{\UseLasy\mdsmwhtsquare \alpha \land \alpha\cramped{\sp\bot}} \Sb\emptyset, \brackets*{\UseLasy\mdwhtdiamond \alpha\cramped{\sp\bot} \Sb{\braces{\alpha}}, \alpha, \UseLasy\mdsmwhtsquare \alpha \land \alpha\cramped{\sp\bot}}\cramped{_{\alpha}} }
        \infer1[($\UseLasy\mdwhtdiamond$)] { \UseLasy\mdwhtdiamond \parens*{\UseLasy\mdsmwhtsquare \alpha \land \alpha\cramped{\sp\bot}}, \brackets*{\UseLasy\mdwhtdiamond \alpha\cramped{\sp\bot}, \alpha}\cramped{_{\alpha}} }
        \begingroup
          \set{rule style=no rule, template=}
          \infer1{}
          \begingroup
            \set{rule margin=0pt}
            \ellipsis{}{}
          \endgroup
        \endgroup
      \end{prooftree}
      &\quad\tikz[baseline={(0,-0.3ex)}] \pic[anchor=center]{leadsto};\quad
      \begin{prooftree}
        \small
        \begingroup
          \set{rule style=no rule, template=}
          \begingroup
            \set{rule margin=0pt}
            \hypo{}
            \ellipsis{}{}
          \endgroup
        \endgroup
        \infer[no rule]1{ \UseLasy\mdwhtdiamond \parens*{\UseLasy\mdsmwhtsquare \alpha \land \alpha\cramped{\sp\bot}} \Sb\emptyset, \brackets*{\UseLasy\mdwhtdiamond \alpha\cramped{\sp\bot} \Sb{\braces{\alpha}}, \alpha, \UseLasy\mdsmwhtsquare \alpha \land \alpha\cramped{\sp\bot}}\cramped{_{\alpha}} }
        \infer1[($\UseLasy\mdwhtdiamond$)] {
          \UseLasy\mdwhtdiamond \parens*{\UseLasy\mdsmwhtsquare \alpha \land \alpha\cramped{\sp\bot}} \Sb{\pmb{\braces{\alpha}}}, \brackets*{\UseLasy\mdwhtdiamond \alpha\cramped{\sp\bot} \Sb{\pmb{\braces{\alpha}}}, \alpha}\cramped{_{\alpha}}
        }
        \begingroup
          \set{rule style=no rule, template=}
          \infer1{}
          \begingroup
            \set{rule margin=0pt}
            \ellipsis{}{}
          \endgroup
        \endgroup
      \end{prooftree}
    \end{align*}
\end{enumerate}

\begin{remark}
  The use of \emph{sets} to annotate $\UseLasy\mdwhtdiamond$-formulae is
  not essential in our proof.
  In fact, it is possible to use \emph{multisets} instead, but using sets makes
  the proof a bit simpler since we can deal with identical formulae at once
  (see \cref{claim:step:1st,claim:step:2nd}).
\end{remark}

Hereafter, we shall focus on a fixed $\UseLasy\mdwhtdiamond A\cramped{\sp\bot}$ and
treat a cut on $A$ as a rather first-class inference rule.
We write $\mathord\vdash\penalty\binoppenalty\mskip\thickmuskip \Gamma$ if the sequent $\Gamma$ is cut-free provable, and
$\mathord\vdash\cramped{\sp{A}}\penalty\binoppenalty\mskip\thickmuskip \Gamma$ if provable with cuts on $A$ admitted.
It is easily checked that the admissible rules
in \cref{claim:identity,claim:admissible} are still admissible
under proper annotations, with or without cuts;
in particular, the following forms are admissible:
\begin{mathpar}
  \ebproofset{center=false}
  \begin{prooftree}
    \infer0[(id)] { \mathord\vdash\Sp{A}\penalty\binoppenalty\mskip\thickmuskip \Gamma\braces*{\UseLasy\mdwhtdiamond A\cramped{\sp\bot} \Sb\emptyset}\braces*{B\cramped{\sp\bot}, B} }
  \end{prooftree}
  \and
  \begin{prooftree}
    \hypo     { \mathord\vdash\Sp{A}\penalty\binoppenalty\mskip\thickmuskip \Gamma\braces*{} }
    \infer1[(weak)] { \mathord\vdash\Sp{A}\penalty\binoppenalty\mskip\thickmuskip \Gamma\braces*{\UseLasy\mdwhtdiamond A\cramped{\sp\bot} \Sb\emptyset} }
  \end{prooftree}
  \and
  \begin{prooftree}
    \hypo         { \mathord\vdash\Sp{A}\penalty\binoppenalty\mskip\thickmuskip \Gamma\braces*{\UseLasy\mdwhtdiamond A\cramped{\sp\bot} \Sb{\Sigma}, \UseLasy\mdwhtdiamond A\cramped{\sp\bot} \Sb{\Sigma'}} }
    \infer1[(contract)] { \mathord\vdash\Sp{A}\penalty\binoppenalty\mskip\thickmuskip \Gamma\braces*{\UseLasy\mdwhtdiamond A\cramped{\sp\bot} \Sb{\Sigma \medspace\mathord\cup\medspace \Sigma'}} }
  \end{prooftree}
\end{mathpar}

\begin{lemma}
  The following rule is admissible:
  \begin{prooftree*}[right label template=\rlap{\normalfont\inserttext}]
    \hypo            { \mathord\vdash\Sp{A}\penalty\binoppenalty\mskip\thickmuskip \Gamma\braces[\big]{\Delta\braces*{\UseLasy\mdwhtdiamond A\cramped{\sp\bot} \Sb{\Sigma}}} }
    \infer1[(cherry-pick)] { \mathord\vdash\Sp{A}\penalty\binoppenalty\mskip\thickmuskip \Gamma\braces[\big]{\Delta\braces*{}, \UseLasy\mdwhtdiamond A\cramped{\sp\bot} \Sb{\Sigma}} }
  \end{prooftree*}
\end{lemma}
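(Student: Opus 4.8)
The plan is to prove the rule admissible by induction on the given derivation, taking its height as the principal parameter. If $\depth(\Delta\braces*{\mathord{-}}) = 0$ there is nothing to do, since then $\Gamma\braces[\big]{\Delta\braces*{}, \UseLasy\mdwhtdiamond A\cramped{\sp\bot}\Sb{\Sigma}}$ and $\Gamma\braces[\big]{\Delta\braces*{\UseLasy\mdwhtdiamond A\cramped{\sp\bot}\Sb{\Sigma}}}$ denote literally the same sequent; so assume $\depth(\Delta\braces*{\mathord{-}}) \geq 1$. Furthermore it suffices to treat the case $\depth(\Delta\braces*{\mathord{-}}) = 1$, i.e.\ pulling $\UseLasy\mdwhtdiamond A\cramped{\sp\bot}$ out of a single bracket, because the general statement then follows by moving the formula out one bracket at a time along the path of $\Delta\braces*{\mathord{-}}$: apply the depth-one case at the innermost bracket and the induction hypothesis (on the strictly shorter remaining context) at the outer ones.

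For the depth-one case, let $r$ be the last rule of the derivation of $\Gamma\braces[\big]{\brackets*{\Theta, \UseLasy\mdwhtdiamond A\cramped{\sp\bot}\Sb{\Sigma}}}$. Both the displayed bracket and the tracked occurrence $\UseLasy\mdwhtdiamond A\cramped{\sp\bot}\Sb{\Sigma}$ are present in the conclusion of $r$, so $r$ can be neither the instance of $(\UseLasy\mdsmwhtsquare)$ discharging that bracket nor a logical rule rewriting this $\UseLasy\mdwhtdiamond$-formula. Hence for \emph{every} rule $r$ whose principal part is disjoint from the tracked occurrence — which is every rule except the one singled out below — the transformation simply commutes: apply the induction hypothesis to each premise, in which $\UseLasy\mdwhtdiamond A\cramped{\sp\bot}\Sb{\Sigma}$ still sits inside the correspondingly reshaped bracket at the same depth, and re-apply $r$; all side conditions survive because $r$ acts only on material untouched by the move. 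The admissible identity, weakening, and contraction recalled just before the lemma (with their $\emptyset$-annotated $\UseLasy\mdwhtdiamond A\cramped{\sp\bot}$ instances) are used freely.

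The one genuinely delicate case is $r = (\UseLasy\mdwhtdiamond)$ producing exactly the tracked $\UseLasy\mdwhtdiamond A\cramped{\sp\bot}\Sb{\Sigma}$: the derivation ends by absorbing an occurrence of $A\cramped{\sp\bot}$ out of a sibling bracket $\brackets{\Delta''}_{B}$ (so $B \in \Sigma$) into a neighbouring $\UseLasy\mdwhtdiamond A\cramped{\sp\bot}\Sb{\Sigma_0}$, where $\Sigma = \Sigma_0 \cup \braces{B}$. Plain commutation is impossible here, since lifting the smaller-annotation $\UseLasy\mdwhtdiamond A\cramped{\sp\bot}\Sb{\Sigma_0}$ to the top by the induction hypothesis strands the extracted $A\cramped{\sp\bot}$ one bracket too deep. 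The remedy is: after that lift, weaken a fresh $\UseLasy\mdwhtdiamond A\cramped{\sp\bot}\Sb{\emptyset}$ beside the stranded $A\cramped{\sp\bot}$, reabsorb that $A\cramped{\sp\bot}$ into it by a single $(\UseLasy\mdwhtdiamond)$ step (recovering the label $B$), cherry-pick the resulting $\UseLasy\mdwhtdiamond A\cramped{\sp\bot}\Sb{\braces{B}}$ up, and merge it with the first lifted copy by the admissible contraction, yielding $\UseLasy\mdwhtdiamond A\cramped{\sp\bot}\Sb{\Sigma_0 \cup \braces{B}} = \UseLasy\mdwhtdiamond A\cramped{\sp\bot}\Sb{\Sigma}$. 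The main obstacle is precisely the termination of this last cherry-pick: the reinserted $(\UseLasy\mdwhtdiamond)$ step can raise the height, so the recursion must be justified by a finer, well-founded measure rather than height alone, and this is exactly what the annotation bookkeeping buys us — the decisive fact being that an $\emptyset$-annotated $\UseLasy\mdwhtdiamond A\cramped{\sp\bot}$ can never be the conclusion-occurrence of $(\UseLasy\mdwhtdiamond)$, so a cherry-pick on such a formula meets only the commuting cases above and halts at once, anchoring the recursion on the annotation attached to the tracked formula.
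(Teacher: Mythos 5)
Your commuting cases are fine, but the case you single out as \textquote{genuinely delicate} is not delicate at all, and the detour you build for it is where the proof breaks. \Cref{rule:annotated:dia} absorbs an occurrence of $A\cramped{\sp\bot}$ from \emph{any} positive depth below the node carrying the $\UseLasy\mdwhtdiamond A\cramped{\sp\bot}$: the context $\Delta\braces{\mathord{-}}$ separating $\UseLasy\mdwhtdiamond A\cramped{\sp\bot} \Sb{\Sigma}$ from the bracket $\brackets{\Delta'', A\cramped{\sp\bot}}\cramped{_B}$ in the rule schema is unrestricted (compare the paper's example in \cref{sec:annotation} where $A$ is pulled out of two nested brackets at once, only the innermost label being recorded). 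Consequently, after the induction hypothesis lifts $\UseLasy\mdwhtdiamond A\cramped{\sp\bot}\Sb{\Sigma_0}$ one bracket shallower, the \textquote{stranded} $A\cramped{\sp\bot}$ sits one level \emph{deeper} relative to it --- still at positive depth --- so the very same instance of \labelcref*{rule:annotated:dia} reapplies verbatim, with annotation $\Sigma_0 \cup \braces{B} = \Sigma$, and this case commutes exactly like all the others. There is no obstruction to plain commutation; the lemma really is provable by the single structural induction the paper indicates, with no auxiliary measure.

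The workaround you substitute for this non-problem is, as you half-concede, not well-founded as stated. After weakening in $\UseLasy\mdwhtdiamond A\cramped{\sp\bot}\Sb{\emptyset}$ beside the stranded $A\cramped{\sp\bot}$ and reabsorbing it by a fresh ($\UseLasy\mdwhtdiamond$) step, the formula you then cherry-pick carries annotation $\braces{B}$, not $\emptyset$, and the derivation it lives in \emph{ends} with a ($\UseLasy\mdwhtdiamond$) step whose conclusion-occurrence is that very formula --- i.e., the recursive call lands squarely back in the case you claimed needs the detour. Your termination anchor (\textquote{an $\emptyset$-annotated $\UseLasy\mdwhtdiamond A\cramped{\sp\bot}$ can never be the conclusion-occurrence of ($\UseLasy\mdwhtdiamond$)}) therefore does not apply to the occurrence you actually recurse on, and neither the height nor any measure you name decreases. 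The fix is simply to delete the detour and treat \labelcref*{rule:annotated:dia} acting on the tracked occurrence as one more commuting case.
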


\begin{proof}
  By induction on derivation.
\end{proof}

\section{Diagonal-Formula-Elimination}\label{sec:diagonal-elim}

In this section, we prove the key \lcnamecref{claim:main} of this paper
by fully using annotations. Our goal is to show the following:

\begin{lemma}[Diagonal-formula-elimination]\label{claim:main}
  If\/ $\mathord\vdash\penalty\binoppenalty\mskip\thickmuskip \Gamma\braces{\brackets{\UseLasy\mdwhtdiamond A\cramped{\sp\bot}, A}}$, then\/ $\mathord\vdash\cramped{\sp{A}}\penalty\binoppenalty\mskip\thickmuskip \Gamma\braces{\brackets{A}}$.
\end{lemma}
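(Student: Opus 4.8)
The plan is to trade the problematic cut for a bounded number of local surgeries: each step removes one application of~\cref{rule:annotated:dia} acting on the distinguished $\UseLasy\mdwhtdiamond A^{\perp}$ at the cost of a single cut on $A$, and once that diamond formula has become inert it is simply deleted. First I would lift the given cut-free derivation $\mathscr D$ of $\Gamma\braces*{\brackets*{\UseLasy\mdwhtdiamond A^{\perp}, A}}$ to the annotated system of~\cref{sec:annotation} (the annotation on the displayed bracket is immaterial, so take it to be $A$), and keep track of the occurrence of $\UseLasy\mdwhtdiamond A^{\perp}$ inside that bracket, together with its annotation set $\Sigma$. It is convenient to prove the apparently stronger statement in which the hypothesis $\vdash$ is relaxed to $\vdash^{A}$, arguing by induction on the number $n$ of applications of~\cref{rule:annotated:dia} in the derivation that act on this tracked $\UseLasy\mdwhtdiamond A^{\perp}$; the set-valued annotations then also allow one, if desired, to collapse all uses sharing a common bracket annotation into a single step.

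For the base case $n = 0$, the tracked $\UseLasy\mdwhtdiamond A^{\perp}$ is never principal, hence going upward it is merely carried along and ends in the context of some~\cref{rule:id} at every leaf; and since the bracket $\brackets*{\UseLasy\mdwhtdiamond A^{\perp}, A}$ already occurs in the end-sequent, none of its ancestors is ever discharged by~\cref{rule:annotated:box} (brackets disappear only downward, and only via that rule), so no rule instance depends on the presence of the tracked formula. I would therefore erase it throughout, obtaining a derivation of $\Gamma\braces*{\brackets*{A}}$ that uses no more cuts than before, i.e.\ $\vdash^{A} \Gamma\braces*{\brackets*{A}}$.

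For the inductive step $n > 0$, let $r$ be the \emph{lowest} application of~\cref{rule:annotated:dia} acting on the tracked $\UseLasy\mdwhtdiamond A^{\perp}$, and say it pulls a copy of $A^{\perp}$ directly out of a bracket annotated $B$. That $A^{\perp}$ lies one nesting level below $\UseLasy\mdwhtdiamond A^{\perp}$, hence strictly inside the outermost bracket of our sequent, which has collapsed to the two-element $\brackets*{A}$ by the end-sequent; since brackets disappear only via~\cref{rule:annotated:box}, the $B$-annotated bracket is discharged by a unique application $r'$ of~\cref{rule:annotated:box} somewhere below $r$, at which point it is exactly $\brackets*{\UseLasy\mdwhtdiamond B^{\perp}, B}_{B}$. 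This is the configuration of~\cref{fig:overview}, and I would carry out the two-step transformation shown there: \emph{first}, split $\mathscr D$ at $r'$, re-derive the conclusion of $r'$ with $\UseLasy\mdwhtdiamond A^{\perp}$ replaced by a fresh $\UseLasy\mdwhtdiamond B^{\perp}$ (from an admissible identity on $B$ (\cref{claim:identity}), one application of~\cref{rule:annotated:dia} on $\UseLasy\mdwhtdiamond B^{\perp}$, and $r'$ itself), and graft below it the part of $\mathscr D$ under $r'$ carrying the same substitution $\UseLasy\mdwhtdiamond A^{\perp}\mapsto\UseLasy\mdwhtdiamond B^{\perp}$, which is legitimate precisely because $r$ was chosen lowest; \emph{second}, weaken and rebase (\cref{claim:admissible}) the result to push $\brackets*{\UseLasy\mdwhtdiamond B^{\perp}, A}$ into the deep slot vacated by $A^{\perp}$, cut on $A$ against the subderivation above $r$ (thereby rebuilding the conclusion of $r$ without using $r$), and reattach the rest of $\mathscr D$ below $r$, modulo admissible rules; the fresh $\UseLasy\mdwhtdiamond B^{\perp}$ is absorbed as the diagonal formula of $r'$ and does not survive. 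The outcome is a $\vdash^{A}$-derivation of the same end-sequent from which $r$ has disappeared and to which no new application of~\cref{rule:annotated:dia} on $\UseLasy\mdwhtdiamond A^{\perp}$ has been added, so $n$ has strictly decreased and the induction hypothesis finishes the argument.

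The two points I expect to need genuine care are, first, the structural fact underlying the inductive step — that the lowest use $r$ of~\cref{rule:annotated:dia} on $\UseLasy\mdwhtdiamond A^{\perp}$ is matched below it by an application $r'$ of~\cref{rule:annotated:box} discharging \emph{exactly} the bracket that $r$ emptied, and that all annotations stay coherent after the substitution $\UseLasy\mdwhtdiamond A^{\perp}\mapsto\UseLasy\mdwhtdiamond B^{\perp}$; this is precisely the information that the bracket annotations and the side conditions on~\cref{rule:annotated:dia,rule:annotated:box} are there to record, which is why working in the annotated system pays off. Second, the second step demands a fair amount of context reshaping — weakenings, rebasings, contractions, and the admissible (cherry-pick) rule — to make the cut on $A$ well-formed (this is the origin of the mismatch between the two contexts $\Gamma''$ and $\Gamma'''$ appearing in~\cref{fig:overview}); packaging the two halves as separate lemmas keeps that bookkeeping local and essentially routine.
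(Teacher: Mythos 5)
Your overall strategy---iterating the two-step surgery of \cref{fig:overview}, paying one cut on $A$ per removed use of the diagonal formula---is the right one, and your base case and your description of a single surgery step match what the paper does. The gap is in the termination measure. You induct on the number $n$ of applications of \cref{rule:annotated:dia} acting on the tracked $\UseLasy\mdwhtdiamond A\cramped{\sp\bot}$ and claim $n$ strictly decreases, but one round of surgery \emph{duplicates} the part of the derivation below $r'$: that piece occurs once (with the substitution $\UseLasy\mdwhtdiamond A\cramped{\sp\bot}\mapsto\UseLasy\mdwhtdiamond B\cramped{\sp\bot}$) inside the truncated derivation feeding the right premise of the new cut, and once more, untouched, as the continuation below the reattached $r'$. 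That lower piece can contain further applications of \cref{rule:annotated:dia} on $\UseLasy\mdwhtdiamond A\cramped{\sp\bot}$ sitting in side branches of two-premise rules (\labelcref{rule:and,rule:cut}); these are incomparable with $r$ in the proof tree, so choosing $r$ ``lowest'' does not exclude them, and in general no choice of $r$ can. Consequently those applications get doubled and $n$ may grow rather than shrink; worse, on exactly those applications the substitution $\UseLasy\mdwhtdiamond A\cramped{\sp\bot}\mapsto\UseLasy\mdwhtdiamond B\cramped{\sp\bot}$ is not even well defined, since \cref{rule:annotated:dia} applied to $\UseLasy\mdwhtdiamond B\cramped{\sp\bot}$ extracts $B\cramped{\sp\bot}$, not $A\cramped{\sp\bot}$. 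So the induction is not well founded as stated.

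The remark you make in passing---that one could ``collapse all uses sharing a common bracket annotation into a single step''---is not an optional refinement but the essential fix, and it is what the paper does. \Cref{claim:step:1st} truncates at \emph{every} $\brackets{\mathord{-}}_B$ simultaneously and, crucially, keeps $\UseLasy\mdwhtdiamond A\cramped{\sp\bot}$ in place, merely shrinking its annotation set to some $\Sigma'\subseteq\Sigma\setminus\braces{B}$ instead of substituting it away, so no well-definedness problem arises; \cref{claim:step:2nd} then replaces every extraction tagged $B$ by a cut on $A$ against that one truncated derivation; and \cref{claim:annotation-elim} inducts on the \emph{set}~$\Sigma$ of distinct bracket annotations, which does strictly decrease even though the raw number of \cref{rule:annotated:dia} instances may increase, precisely because every duplicated copy carries only annotations already in $\Sigma\setminus\braces{B}$. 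Recasting your induction over that set (or any measure insensitive to duplication of subderivations) closes the gap; as written, the argument does not terminate.
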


We first show that in exchange for adding an extra assumption,
we can reduce annotated formulae from~$\UseLasy\mdwhtdiamond A\cramped{\sp\bot}$.
This is similar to the step shown in \cref{fig:step:1st}, but
we process all $\brackets{\mathord{-}}$'s annotated with the same formula at once.

\begin{lemma}\label{claim:step:1st}
  Suppose\/ $\mathord\vdash\cramped{\sp{A}}\penalty\binoppenalty\mskip\thickmuskip \Gamma\braces{\brackets{\UseLasy\mdwhtdiamond A\cramped{\sp\bot} \Sb{\Sigma}, \Delta}}$, and let $B \in \Sigma$.
  If $\Delta$ contains no\/ $\brackets{\mathord{-}}$ annotated with $B$,
  then\/ $\mathord\vdash\Sp{A}\penalty\binoppenalty\mskip\thickmuskip \Gamma\braces{\brackets{\UseLasy\mdwhtdiamond A\cramped{\sp\bot} \Sb{\Sigma'} \mathcomma\penalty\binoppenalty \UseLasy\mdwhtdiamond B\cramped{\sp\bot} \mathcomma\penalty\binoppenalty \Delta}}$
  for some\/ $\Sigma' \subseteq \Sigma \setminus\braces B$.
\end{lemma}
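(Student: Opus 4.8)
The plan is to prove \cref{claim:step:1st} by induction on the height of a derivation $\mathscr D$ of $\Gamma\braces*{\brackets*{\UseLasy\mdwhtdiamond A\cramped{\sp\bot}\Sb\Sigma, \Delta}}$ using cuts on~$A$, following the occurrence of $\UseLasy\mdwhtdiamond A\cramped{\sp\bot}$ shown in the statement together with the bracket $\brackets{\mathord{-}}$ that holds it, and casing on the last rule of $\mathscr D$. It is convenient to prove along the way the slightly stronger statement in which the proviso that $\Delta$ contain no $\brackets{\mathord{-}}$ annotated with $B$ is dropped; the stated form then follows at once. The base case is vacuous: by the annotation discipline of \cref{fig:annotated}, an initial sequent carries only $\emptyset$ as the annotation set of a $\UseLasy\mdwhtdiamond$-formula, so $\Sigma=\emptyset$, contradicting $B\in\Sigma$.

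Most inductive steps are routine commutations. When the last rule acts on a subformula, or on a bracket or a $\UseLasy\mdwhtdiamond$-formula disjoint from this $\UseLasy\mdwhtdiamond A\cramped{\sp\bot}$ and its enclosing bracket, I would apply the induction hypothesis to each premise and re-apply the rule: the depth side-conditions of the diagonal and rebasing rules are unaffected by the inserted $\UseLasy\mdwhtdiamond B\cramped{\sp\bot}$, which alters no nesting depths, and the bracket-annotation-sharing requirement for the two-premise rules is untouched since no bracket annotation is changed. For the two-premise rules one lets the new $\Sigma'$ be the union of the two sets returned by the induction hypothesis --- still contained in $\Sigma\setminus\braces B$ --- and merges the two inserted copies of $\UseLasy\mdwhtdiamond B\cramped{\sp\bot}$ by the admissible contraction of \cref{claim:admissible}. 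The only mildly delicate commuting step is when the last rule is \cref*{rule:annotated:dia} applied to this $\UseLasy\mdwhtdiamond A\cramped{\sp\bot}$, pulling an $A\cramped{\sp\bot}$ out of a bracket annotated with some~$C$ and adjoining $C$ to the annotation set: if $C\neq B$ we commute as before, restoring $C$ to $\Sigma'$, which is legitimate since $C\in\Sigma\setminus\braces B$.

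The substantive cases are the two ways in which a bracket annotated with $B$ can meet the displayed bracket. First, if the last rule is \cref*{rule:annotated:box} discharging a bracket $\brackets*{\UseLasy\mdwhtdiamond B\cramped{\sp\bot}\Sb\Pi, B}_B$ whose resulting $\UseLasy\mdsmwhtsquare B$ lies inside $\Delta$, then its premise re-exposes that $B$-annotated bracket inside $\Delta$; this is precisely why the proviso must be absent from the induction hypothesis, and once it is, applying the induction hypothesis to the premise and re-applying \cref*{rule:annotated:box} yields $\Gamma\braces*{\brackets*{\UseLasy\mdwhtdiamond A\cramped{\sp\bot}\Sb{\Sigma'}, \UseLasy\mdwhtdiamond B\cramped{\sp\bot}, \Delta}}$ directly, with $\Sigma'\subseteq\Sigma\setminus\braces B$. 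Second --- and this is the genuinely hard case --- is the residual $C=B$ branch of the previous paragraph: the premise still carries an $A\cramped{\sp\bot}$ inside the $B$-annotated bracket, and to reach the conclusion this $A\cramped{\sp\bot}$ must be removed without letting $B$ re-enter the annotation set. Here I would run the manoeuvre of \cref{fig:overview}: using \cref{claim:identity} together with the admissible weakening and \cref*{rule:rebase} of \cref{claim:admissible} --- in effect the truncation of \cref{fig:step:1st}, performed above the pending application of \cref*{rule:annotated:box} for~$B$ with $\UseLasy\mdwhtdiamond B\cramped{\sp\bot}$ adjoined as an assumption --- build a companion derivation in which the stray $A\cramped{\sp\bot}$ appears instead as a plain~$A$, then delete it by a cut on~$A$, which is admitted in this system; the freshly introduced $\UseLasy\mdwhtdiamond B\cramped{\sp\bot}$ is exactly the diagonal formula that takes over the role formerly played by $B\in\Sigma$, and the (cherry-pick) rule returns it to the side of $\UseLasy\mdwhtdiamond A\cramped{\sp\bot}$. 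I expect the main obstacle to be twofold: verifying that this global rewriting strictly decreases the induction measure --- which will call for a secondary induction, say on the number of $B$-annotated brackets still lying below the displayed bracket, so that the extra \cref*{rule:annotated:box} and the height-non-increasing admissible rules do not break termination --- and keeping the annotation-set bookkeeping straight through the two-premise rules and the cut.
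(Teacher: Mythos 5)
There is a genuine gap, and it sits exactly where you relocated the difficulty. By dropping the proviso you make the induction hypothesis strong enough to commute past an application of \cref*{rule:annotated:box} that re-exposes a $\brackets{\mathord{-}}_B$ inside $\Delta$, but you thereby create the case you yourself call \enquote{genuinely hard}: \cref*{rule:annotated:dia} applied to the displayed $\UseLasy\mdwhtdiamond A\cramped{\sp\bot}$, extracting an $A\cramped{\sp\bot}$ from a bracket annotated with $B$. Your plan there is to delete that $A\cramped{\sp\bot}$ by a cut on $A$ against a \enquote{companion derivation in which the stray $A\cramped{\sp\bot}$ appears instead as a plain $A$}, but no such companion is available inside this induction: the cut of \cref{def:cut} is context-sharing, so you need a derivation of the \emph{same} sequent with $A$ at that very position, and \cref*{rule:id} together with weakening and \cref*{rule:rebase} only yields sequents containing $A$ \emph{and} $A\cramped{\sp\bot}$ side by side. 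In the paper that positive-$A$ premise is the hypothesis~\eqref{eq:dag:1} of the separate \cref{claim:step:2nd}, and it exists only because the whole machinery is ultimately applied to $\Gamma\braces{\brackets{\UseLasy\mdwhtdiamond A\cramped{\sp\bot}, A}}$, whose end-sequent itself supplies the $A$ that is then weakened and rebased into place; in a lemma quantified over arbitrary $\Delta$ it cannot be manufactured. (Note also that when the premise's annotation set $\Pi$ does not contain $B$, your strengthened induction hypothesis does not even apply to the premise, so this case cannot be started.) Your own worry about the termination of the secondary induction is symptomatic: the single strengthened lemma is not provable by this route, and inlining the cut manoeuvre here collapses the two-lemma decomposition that makes the measures work.

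You also miss the key idea in the paper's treatment of the one case you flag as needing the strengthening. In the case where \cref*{rule:annotated:box} discharges a $\brackets{\mathord{-}}_B$ inside $\Delta$, the paper does \emph{not} invoke the induction hypothesis at all: since the annotation guarantees that this bracket is closed by proving $B$, the entire subderivation above that point is thrown away and replaced by a three-line, cut-free derivation starting from \cref*{rule:id} on $B, B\cramped{\sp\bot}$, using the freshly added $\UseLasy\mdwhtdiamond B\cramped{\sp\bot}$ as the diagonal formula via \cref*{rule:annotated:dia} and then \cref*{rule:annotated:box}, yielding $\Sigma' = \emptyset$. This is precisely what allows the proviso to be kept, so that the $C \equiv B$ subcase of \cref*{rule:annotated:dia} never arises in \cref{claim:step:1st}; the removal of the residual $A\cramped{\sp\bot}$'s by cuts is then deferred to \cref{claim:step:2nd}, where the needed positive-$A$ sequent is an explicit hypothesis. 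I recommend restoring the proviso and this rebuild-from-identity step rather than strengthening the statement.
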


\begin{proof}
  By induction on derivation. Here we show only two cases.
  \begin{case}
    The most important case is where
    \cref*{rule:annotated:box} is applied to $B$ within $\Delta$:
    \begin{prooftree*}[right label template=\rlap{\normalfont\inserttext}]
      \begingroup
        \set{rule style=no rule, template=}
        \begingroup
          \set{rule margin=0pt}
          \hypo{}
          \ellipsis{\small$\mathscr{D}$}{}
        \endgroup
      \endgroup
      \infer[no rule]1{ \Gamma\braces*{\brackets[\big]{\UseLasy\mdwhtdiamond A\cramped{\sp\bot} \Sb{\Sigma}, \Delta'\braces*{\brackets*{\UseLasy\mdwhtdiamond B\cramped{\sp\bot}, B}\cramped{_B}}}} }
      \infer1[($\UseLasy\mdsmwhtsquare$)] { \Gamma\braces*{\brackets[\big]{\UseLasy\mdwhtdiamond A\cramped{\sp\bot} \Sb{\Sigma}, \Delta'\braces*{\UseLasy\mdsmwhtsquare B}}} }
    \end{prooftree*}
    We cannot then use the induction hypothesis
    because the newly appering~$\brackets{\mathord{-}}_B$ in the premise
    breaks the requirement of the claim.
    So instead, we derive the desired sequent without using~$\mathscr{D}$ as follows:
    \begin{prooftree*}[right label template=\rlap{\normalfont\inserttext}]
      \infer[dashed]0[(id)] { \Gamma\braces*{\brackets[\big]{\UseLasy\mdwhtdiamond A\cramped{\sp\bot} \Sb\emptyset, \UseLasy\mdwhtdiamond B\cramped{\sp\bot}, \Delta'\braces*{\brackets*{\UseLasy\mdwhtdiamond B\cramped{\sp\bot}, B, B\cramped{\sp\bot}}\cramped{_B}}}} }
      \infer1[($\UseLasy\mdwhtdiamond$)] { \Gamma\braces*{\brackets[\big]{\UseLasy\mdwhtdiamond A\cramped{\sp\bot} \Sb\emptyset, \UseLasy\mdwhtdiamond B\cramped{\sp\bot}, \Delta'\braces*{\brackets*{\UseLasy\mdwhtdiamond B\cramped{\sp\bot}, B}\cramped{_B}}}} }
      \infer1[($\UseLasy\mdsmwhtsquare$)] { \Gamma\braces*{\brackets[\big]{\UseLasy\mdwhtdiamond A \cramped{\sp\bot}\Sb\emptyset, \UseLasy\mdwhtdiamond B\cramped{\sp\bot}, \Delta'\braces*{\UseLasy\mdsmwhtsquare B}}} }
    \end{prooftree*}
  \end{case}
  \begin{case}
    Suppose
    \begin{prooftree*}[right label template=\rlap{\normalfont\inserttext}]
      \begingroup
        \set{rule style=no rule, template=}
        \begingroup
          \set{rule margin=0pt}
          \hypo{}
          \ellipsis{\small$\mathscr{D}$}{}
        \endgroup
      \endgroup
      \infer[no rule]1{ \Gamma\braces*{\brackets[\big]{\UseLasy\mdwhtdiamond A\cramped{\sp\bot} \Sb{\Sigma}, \Delta'\braces*{A\cramped{\sp\bot}}}} }
      \infer1[($\UseLasy\mdwhtdiamond$)] { \Gamma\braces*{\brackets[\big]{\UseLasy\mdwhtdiamond A\cramped{\sp\bot} \Sb{\Sigma\medspace\mathord\cup\medspace\braces C}, \Delta'\braces*{}}} }
    \end{prooftree*}
    By assumption, we have $C \not\equiv B$; otherwise, there exists
    a~$\brackets{\mathord{-}}_B$ within~$\Delta'\braces{\mathord{-}}$, a~contradiction.
    Hence $B \in \Sigma$, and
    using the induction hypothesis we have the following:
    \begin{prooftree*}[right label template=\rlap{\normalfont\inserttext}]
      \begingroup
        \set{rule style=no rule, template=}
        \begingroup
          \set{rule margin=0pt}
          \hypo{}
          \ellipsis{\small$\mathscr{D}$}{}
        \endgroup
      \endgroup
      \infer[no rule]1{ \Gamma\braces*{\brackets[\big]{\UseLasy\mdwhtdiamond A\cramped{\sp\bot} \Sb{\Sigma}, \Delta'\braces*{A\cramped{\sp\bot}}}} }
      \infer1[(IH)]  { \Gamma\braces*{\brackets[\big]{\UseLasy\mdwhtdiamond A\cramped{\sp\bot} \Sb{\Sigma'}, \UseLasy\mdwhtdiamond B\cramped{\sp\bot}, \Delta'\braces*{A\cramped{\sp\bot}}}} }
      \infer1[($\UseLasy\mdwhtdiamond$)] { \Gamma\braces*{\brackets[\big]{\UseLasy\mdwhtdiamond A\cramped{\sp\bot} \Sb{\Sigma' \medspace\mathord\cup\medspace\braces C}, \UseLasy\mdwhtdiamond B\cramped{\sp\bot}, \Delta'\braces*{}}} }
    \end{prooftree*}
    where $\Sigma' \subseteq \Sigma \setminus\braces B$, and so
    $\Sigma' \medspace\mathord\cup\medspace\braces C \subseteq (\Sigma \medspace\mathord\cup\medspace\braces C) \setminus\braces B$ as required.
  \end{case}
  The other cases are straightforward.
\end{proof}

\begin{definition}
  Let $\Delta$ be an annotated sequent.
  Then, the sequent~$\Delta\Sp{+}$ is defined inductively as follows:
  \begin{align*}
    (\mkern\medmuskip\mathord\cdot\vphantom{x}\mkern\medmuskip)\Sp{+} &\equiv \mkern\medmuskip\mathord\cdot\vphantom{x}\mkern\medmuskip \mkern1mu\text; \\
    \parens[\big]{\Gamma, C}\Sp{+} &\equiv \Gamma\Sp{+}, C \mkern1mu\text; \\
    \parens[\big]{\Gamma, \brackets*{\Delta}\cramped{_C}}\Sp{+} &\equiv
      \Gamma\Sp{+}, \brackets*{\Delta\Sp{+}, \UseLasy\mdwhtdiamond C\cramped{\sp\bot}}\cramped{_C} \mkern1mu\text.
  \end{align*}
  The weakened context $\Delta\Sp{+}\braces{\mathord{-}}$ for $\Delta\braces{\mathord{-}}$ is
  defined in a similar way.
\end{definition}

In short, $\Delta\Sp{+}$ is the sequent in which all subsequents
of the form~$\brackets{\Delta'}\cramped{_C}$ within $\Delta$ are weakened to
$\brackets{\Delta' \mathcomma\penalty\binoppenalty \UseLasy\mdwhtdiamond C\cramped{\sp\bot}}\cramped{_C}$ with the diagonal formula~$\UseLasy\mdwhtdiamond C\cramped{\sp\bot}$.
For example,
$\parens[\big]{A \mathcomma\penalty\binoppenalty \brackets[\big]{\brackets{B}\cramped{_B} \mathcomma\penalty\binoppenalty \brackets{C \mathcomma\penalty\binoppenalty D}\cramped{_E}}\cramped{_C}}\sp{+}$
represents the sequent
\[
  A, \brackets[\big]{\brackets*{B, \UseLasy\mdwhtdiamond B\cramped{\sp\bot}}\cramped{_B}, \brackets*{C, D, \UseLasy\mdwhtdiamond E\cramped{\sp\bot}}\cramped{_E}, \UseLasy\mdwhtdiamond C\cramped{\sp\bot}}\cramped{_C} \mkern1mu\text.
\]

We next show that the extra assumption~$\UseLasy\mdwhtdiamond B\cramped{\sp\bot}$ added in \cref{claim:step:1st}
can be moved to its \textquote{proper place}, namely, inside some $\brackets{\mathord{-}}_B$,
where it should eventually be contracted with the diagonal formula.
This corresponds to the step of \cref{fig:step:2nd}, but again
we handle multiple instances simultaneously.

\begin{lemma}\label{claim:step:2nd}
  Suppose
  \[
    \mathord\vdash\sp{A}\penalty\binoppenalty\mskip\thickmuskip \Gamma\braces[\big]{\brackets*{\UseLasy\mdwhtdiamond A\cramped{\sp\bot} \sb{\Sigma}, \UseLasy\mdwhtdiamond B\cramped{\sp\bot}, A}} \mkern1mu\text.
    \tag{\dag1} \label{eq:dag:1}
  \]
  If
  \[
    \mathord\vdash\sp{A}\penalty\binoppenalty\mskip\thickmuskip \Gamma\braces*{\brackets*{\UseLasy\mdwhtdiamond A\cramped{\sp\bot} \sb{\Pi}, \Delta}} \mkern1mu\text,
    \tag{\dag2} \label{eq:dag:2}
  \]
  then\/ $\mathord\vdash\Sp{A}\penalty\binoppenalty\mskip\thickmuskip \Gamma\braces{\brackets{\UseLasy\mdwhtdiamond A\cramped{\sp\bot} \Sb{\Pi'}, \Delta\Sp{+}}}$
  for some\/ $\Pi' \subseteq \Sigma \cup (\Pi \setminus\braces B)$.
\end{lemma}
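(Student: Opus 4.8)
The plan is to induct on the derivation $\mathscr{E}$ witnessing \eqref{eq:dag:2}. Two preliminary reductions clear away almost everything. If $B \notin \Pi$ there is nothing to do beyond weakening: $\Delta\Sp{+}$ is got from $\Delta$ purely by inserting extra diagonal formulae $\UseLasy\mdwhtdiamond C\cramped{\sp\bot}$ into the brackets $\brackets{\mathord{-}}_C$, so height-preserving weakening (\cref{claim:admissible}) already gives $\mathord\vdash\Sp{A}\penalty\binoppenalty\mskip\thickmuskip \Gamma\braces{\brackets*{\UseLasy\mdwhtdiamond A\cramped{\sp\bot} \Sb{\Pi}, \Delta\Sp{+}}}$, and $\Pi$ itself meets the bound. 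So I may assume $B \in \Pi$. It is also worth recording once and for all that $\Sigma \cup (\mathord{-} \setminus \braces{B})$ is monotone and commutes with finite unions; this is what keeps the annotation bookkeeping afloat through the two-premise rules. Now look at the last rule of $\mathscr{E}$. An initial sequent is impossible, since the initial-sequent condition on annotations forces $\Pi = \emptyset$. If the rule acts only in the context $\Gamma\braces{\mathord{-}}$, or is a logical rule somewhere inside $\Delta$, or applies \labelcref{rule:annotated:dia} to a $\UseLasy\mdwhtdiamond$-formula of $\Delta$ other than the tracked $\UseLasy\mdwhtdiamond A\cramped{\sp\bot}$, then I apply the induction hypothesis to the premise(s) and reinstate the rule. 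The points needing care are only that $(\mathord{-})\Sp{+}$ commutes with that rule up to admissible rules --- the single slightly delicate subcase being \labelcref{rule:annotated:box} applied within $\Delta$, where the diagonal formula that $(\mathord{-})\Sp{+}$ re-creates inside the freshly discharged bracket is first absorbed by height-preserving contraction (\cref{claim:admissible}) --- and that, for the two-premise rules \labelcref{rule:and,rule:cut}, the bracket annotations (hence $\Delta\Sp{+}$) are common to both premises while the annotation of $\UseLasy\mdwhtdiamond A\cramped{\sp\bot}$ in the conclusion is the union of the premise annotations, so that taking the union of the two sets returned by the induction hypothesis stays inside $\Sigma \cup (\Pi \setminus \braces{B})$ by the monotonicity remark.

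The substantive case is when the last rule of $\mathscr{E}$ is \labelcref{rule:annotated:dia} applied to the tracked $\UseLasy\mdwhtdiamond A\cramped{\sp\bot}$, pulling its inner $A\cramped{\sp\bot}$ out of some sub-bracket of $\Delta$ annotated by a formula $C$; so $\Delta$ has the form $\Theta\braces{\brackets*{\Xi}\cramped{_C}}$ and the premise is $\mathord\vdash\Sp{A}\penalty\binoppenalty\mskip\thickmuskip \Gamma\braces{\brackets*{\UseLasy\mdwhtdiamond A\cramped{\sp\bot} \Sb{\Pi_0}, \Theta\braces{\brackets*{\Xi, A\cramped{\sp\bot}}\cramped{_C}}}}$ with $\Pi = \Pi_0 \cup \braces{C}$. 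The induction hypothesis on this premise yields $\mathord\vdash\Sp{A}\penalty\binoppenalty\mskip\thickmuskip \Gamma\braces{\brackets*{\UseLasy\mdwhtdiamond A\cramped{\sp\bot} \Sb{\Pi_1}, \Theta\Sp{+}\braces{\brackets*{\Xi\Sp{+}, A\cramped{\sp\bot}, \UseLasy\mdwhtdiamond C\cramped{\sp\bot}}\cramped{_C}}}}$ with $\Pi_1 \subseteq \Sigma \cup (\Pi_0 \setminus \braces{B})$ --- exactly the sequent we want, except for a residual $A\cramped{\sp\bot}$ still inside the $\brackets{\mathord{-}}_C$. If $C \not\equiv B$, I reapply \labelcref{rule:annotated:dia} to pull that $A\cramped{\sp\bot}$ back out, landing on $\Gamma\braces{\brackets*{\UseLasy\mdwhtdiamond A\cramped{\sp\bot} \Sb{\Pi_1 \cup \braces{C}}, \Delta\Sp{+}}}$; since $C \neq B$ we have $\Pi_1 \cup \braces{C} \subseteq \Sigma \cup (\Pi_0 \setminus \braces{B}) \cup \braces{C} = \Sigma \cup (\Pi \setminus \braces{B})$. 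If $C \equiv B$, that move would smuggle $B$ back into the annotation, so instead I delete the residual $A\cramped{\sp\bot}$ by a cut on $A$, in the style of \cref{fig:step:2nd}: the other premise of this cut is manufactured from the hypothesis \eqref{eq:dag:1} by weakening in a fresh bracket shaped like the intended conclusion --- with an empty slot inside its $\brackets{\mathord{-}}_B$ --- and then using (rebase) (\cref{claim:admissible}) to slide the whole content of \eqref{eq:dag:1}'s bracket, in particular its $A$, down into that slot, whereupon the auxiliary $\UseLasy\mdwhtdiamond B\cramped{\sp\bot}$ it carries along merges by contraction with the copy $(\mathord{-})\Sp{+}$ has already deposited there. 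After the cut (with a little weakening to reconcile the two contexts) the only residue is the $\UseLasy\mdwhtdiamond A\cramped{\sp\bot}$ of annotation $\Sigma$ imported from \eqref{eq:dag:1}, now deep inside the $\brackets{\mathord{-}}_B$; this is hoisted back out to the top level of $\brackets{\mathord{-}}$ by (cherry-pick) and contracted with the tracked $\UseLasy\mdwhtdiamond A\cramped{\sp\bot}$. The result is $\Gamma\braces{\brackets*{\UseLasy\mdwhtdiamond A\cramped{\sp\bot} \Sb{\Pi_1 \cup \Sigma}, \Delta\Sp{+}}}$, and $\Pi_1 \cup \Sigma \subseteq \Sigma \cup (\Pi_0 \setminus \braces{B}) = \Sigma \cup (\Pi \setminus \braces{B})$, so $\Pi' = \Pi_1 \cup \Sigma$ does the job.

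I expect the $C \equiv B$ subcase to be the only genuine obstacle. One must line up the two premises of the cut exactly --- reconciling, for every $\UseLasy\mdwhtdiamond$-formula, the two annotation sets occurring on the two sides (by further admissible weakening and (id), \cref{claim:identity}) and matching the position of every bracket --- and, above all, route the auxiliary $\UseLasy\mdwhtdiamond B\cramped{\sp\bot}$ and $\UseLasy\mdwhtdiamond A\cramped{\sp\bot}$ dragged in together with $A$ from \eqref{eq:dag:1} so that they are really reabsorbed rather than polluting the conclusion, all the while keeping the annotation of the tracked $\UseLasy\mdwhtdiamond A\cramped{\sp\bot}$ inside $\Sigma \cup (\Pi \setminus \braces{B})$. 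This is exactly the bookkeeping that the two flavours of annotation were introduced to make tractable, and getting every set-inclusion to come out on the nose is where the argument is at its fiddliest.
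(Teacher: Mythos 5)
Your proposal is correct and follows essentially the same route as the paper: induction on the derivation of~\eqref{eq:dag:2}, with the critical case being \labelcref{rule:annotated:dia} applied to the tracked $\UseLasy\mdwhtdiamond A\cramped{\sp\bot}$ at a $\brackets{\mathord{-}}_B$, resolved by a cut on~$A$ against a copy of~\eqref{eq:dag:1} massaged by weakening, rebasing, cherry-picking, and contraction, and with the \labelcref{rule:annotated:box} case handled by contracting the duplicate $\UseLasy\mdwhtdiamond C\cramped{\sp\bot}$ produced by $(\mathord{-})\Sp{+}$. The only differences are cosmetic (your $B\notin\Pi$ shortcut and a slight reordering of the admissible-rule applications around the cut).
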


\begin{proof}
  By induction on the derivation of~\eqref{eq:dag:2}.
  Here we show only two important cases.
  \begin{case}
    Suppose
    \begin{prooftree*}[right label template=\rlap{\normalfont\inserttext}]
      \begingroup
        \set{rule style=no rule, template=}
        \begingroup
          \set{rule margin=0pt}
          \hypo{}
          \ellipsis{\small$\mathscr{D}$}{}
        \endgroup
      \endgroup
      \infer[no rule]1{ \Gamma\braces*{\brackets*{\UseLasy\mdwhtdiamond A\cramped{\sp\bot} \Sb{\Pi}, \Delta'\braces[\big]{\brackets*{\Delta'', A\cramped{\sp\bot}}\cramped{_B}}}} }
      \infer1[($\UseLasy\mdwhtdiamond$)] { \Gamma\braces*{\brackets*{\UseLasy\mdwhtdiamond A\cramped{\sp\bot} \Sb{\Pi \medspace\mathord\cup\medspace\braces B}, \Delta'\braces[\big]{\brackets*{\Delta''}\cramped{_B}}}} }
    \end{prooftree*}
    To avoid $B$ appended to $\UseLasy\mdwhtdiamond A\cramped{\sp\bot}$ again,
    we dispense with~\labelcref*{rule:annotated:dia} using a cut as follows:
    \begin{prooftree*}
      \small
      \begingroup
        \set{rule style=no rule, template=}
        \begingroup
          \set{rule margin=0pt}
          \hypo{}
          \ellipsis{\small$\mathscr{D}$}{}
        \endgroup
      \endgroup
      \infer[no rule]1{ \Gamma\braces*{\brackets*{\UseLasy\mdwhtdiamond A\cramped{\sp\bot} \Sb{\Pi}, \Delta'\braces[\big]{\brackets*{\Delta'', A\cramped{\sp\bot}}\cramped{_B}}}} }
      \infer1[(IH)] {
        \Gamma\braces[\big]{
          \brackets*{\UseLasy\mdwhtdiamond A\cramped{\sp\bot} \Sb{\Pi'}, \Delta'\Sp{+}\braces*{\brackets*{\Delta''\Sp{+}, \UseLasy\mdwhtdiamond B\cramped{\sp\bot}, A\cramped{\sp\bot}}\cramped{_B}}}
        }
      }
      \infer0[(\ref{eq:dag:1})] { \Gamma\braces*{\brackets*{\UseLasy\mdwhtdiamond A\cramped{\sp\bot} \Sb{\Sigma}, \UseLasy\mdwhtdiamond B\cramped{\sp\bot}, A}} }
      \infer[dashed]1[(weak)] { \Gamma\braces*{\brackets[\big]{\Delta'\Sp{+}\braces*{\brackets*{\Delta''\Sp{+}}\cramped{_B}}}, \brackets*{\UseLasy\mdwhtdiamond A\cramped{\sp\bot} \Sb{\Sigma}, \UseLasy\mdwhtdiamond B\cramped{\sp\bot}, A}} }
      \infer[dashed]1[(rebase)]
        { \Gamma\braces*{\brackets[\big]{\Delta'\Sp{+}\braces[\big]{\brackets*{\Delta''\Sp{+}, \UseLasy\mdwhtdiamond A\cramped{\sp\bot} \Sb{\Sigma}, \UseLasy\mdwhtdiamond B\cramped{\sp\bot}, A}\cramped{_B}}}} }
      \set{right label template=\rlap{\normalfont\inserttext}}
      \infer[dashed]1[(cherry-pick)]
        { \Gamma\braces*{ \brackets*{\UseLasy\mdwhtdiamond A\cramped{\sp\bot} \Sb{\Sigma}, \Delta'\Sp{+}\braces[\big]{\brackets*{\Delta''\Sp{+}, \UseLasy\mdwhtdiamond B\cramped{\sp\bot}, A}\cramped{_B}}} } }
      \infer2[(cut)]
        { \Gamma\braces*{ \brackets[\big]{\UseLasy\mdwhtdiamond A\cramped{\sp\bot} \Sb{\Sigma \medspace\mathord\cup\medspace \Pi'}, \Delta'\Sp{+}\braces*{\brackets*{\Delta''\Sp{+}, \UseLasy\mdwhtdiamond B\cramped{\sp\bot}}\cramped{_B}}} } }
    \end{prooftree*}
    where $\Pi' \subseteq \Sigma \cup (\Pi \setminus\braces B)$,
    and so $\Sigma \cup \Pi' \subseteq \Sigma \cup (\Pi \setminus\braces B)$ as required.
    Observe that the~$\UseLasy\mdwhtdiamond B\cramped{\sp\bot}$ in the left premise is
    due to the operation~$(\mathord{-})\Sp{+}$ of the induction hypothesis, whereas
    that in the right comes from~\eqref{eq:dag:1}.
  \end{case}
  \begin{case}
    Suppose
    \begin{prooftree*}[right label template=\rlap{\normalfont\inserttext}]
      \begingroup
        \set{rule style=no rule, template=}
        \begingroup
          \set{rule margin=0pt}
          \hypo{}
          \ellipsis{\small$\mathscr{D}$}{}
        \endgroup
      \endgroup
      \infer[no rule]1{\Gamma\braces*{\brackets[\big]{\UseLasy\mdwhtdiamond A\cramped{\sp\bot} \Sb{\Pi}, \Delta'\braces*{\brackets*{\UseLasy\mdwhtdiamond C\cramped{\sp\bot}, C}\cramped{_C}}}}}
      \infer1[($\UseLasy\mdsmwhtsquare$)] { \Gamma\braces*{\brackets[\big]{\UseLasy\mdwhtdiamond A\cramped{\sp\bot} \Sb{\Pi}, \Delta'\braces*{\UseLasy\mdsmwhtsquare C}}} }
    \end{prooftree*}
    Then we have
    \begin{prooftree*}[right label template=\rlap{\normalfont\inserttext}]
      \begingroup
        \set{rule style=no rule, template=}
        \begingroup
          \set{rule margin=0pt}
          \hypo{}
          \ellipsis{\small$\mathscr{D}$}{}
        \endgroup
      \endgroup
      \infer[no rule]1{\Gamma\braces*{\brackets[\big]{\UseLasy\mdwhtdiamond A\cramped{\sp\bot} \Sb{\Pi}, \Delta'\braces*{\brackets*{\UseLasy\mdwhtdiamond C\cramped{\sp\bot}, C}\cramped{_C}}}}}
      \infer1[(IH)] { \Gamma\braces*{\brackets[\big]{\UseLasy\mdwhtdiamond A\cramped{\sp\bot} \Sb{\Pi'}, \Delta'\Sp{+}\braces*{\brackets*{\UseLasy\mdwhtdiamond C\cramped{\sp\bot}, \UseLasy\mdwhtdiamond C\cramped{\sp\bot}, C}\cramped{_C}}}} }
      \infer[dashed]1[(contract)] { \Gamma\braces*{\brackets[\big]{\UseLasy\mdwhtdiamond A\cramped{\sp\bot} \Sb{\Pi'}, \Delta'\Sp{+}\braces*{\brackets*{\UseLasy\mdwhtdiamond C\cramped{\sp\bot}, C}\cramped{_C}}}} }
      \infer1[($\UseLasy\mdsmwhtsquare$)] { \Gamma\braces*{\brackets[\big]{\UseLasy\mdwhtdiamond A\cramped{\sp\bot} \Sb{\Pi'}, \Delta'\Sp{+}\braces*{\UseLasy\mdsmwhtsquare C}}} }
    \end{prooftree*}
    for some $\Pi' \subseteq \Sigma \cup (\Pi \setminus\braces B)$.
    We notice that when $C \equiv B$ here, we have successfully canceled out
    that we added~$\UseLasy\mdwhtdiamond B\cramped{\sp\bot}$ in \cref{claim:step:1st}.
  \end{case}
  The other cases are straightforward.
\end{proof}

Repeated application of these two \lcnamecrefs{claim:step:2nd}
sweeps all the annotations away from~$\UseLasy\mdwhtdiamond A\cramped{\sp\bot}$:

\begin{lemma}\label{claim:annotation-elim}
  If
  \[
    \mathord\vdash\sp{A}\penalty\binoppenalty\mskip\thickmuskip \Gamma\braces*{\brackets*{\UseLasy\mdwhtdiamond A\cramped{\sp\bot} \sb{\Sigma}, A}} \mkern1mu\text,
    \tag{\ddag} \label{eq:ddag}
  \]
  then\/ $\mathord\vdash\Sp{A}\penalty\binoppenalty\mskip\thickmuskip \Gamma\braces{\brackets{\UseLasy\mdwhtdiamond A\cramped{\sp\bot} \Sb\emptyset, A}}$.
\end{lemma}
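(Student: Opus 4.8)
The plan is to peel the elements of the finite set~$\Sigma$ off one at a time, by induction on~$\lvert\Sigma\rvert$, using \cref{claim:step:1st,claim:step:2nd} in alternation at each step. If $\Sigma = \emptyset$ there is nothing to do. Otherwise, fix some $B \in \Sigma$. The sequent~$A$ sitting inside the innermost~$\brackets{\mathord{-}}$ of $\Gamma\braces{\brackets{\UseLasy\mdwhtdiamond A\cramped{\sp\bot}\Sb\Sigma, A}}$ is a single formula, hence contains no~$\brackets{\mathord{-}}$ whatsoever and in particular none annotated with~$B$; so \cref{claim:step:1st} applies to~\eqref{eq:ddag} and yields $\mathord\vdash\Sp{A}\penalty\binoppenalty\mskip\thickmuskip \Gamma\braces{\brackets{\UseLasy\mdwhtdiamond A\cramped{\sp\bot}\Sb{\Sigma'}, \UseLasy\mdwhtdiamond B\cramped{\sp\bot}, A}}$ for some $\Sigma' \subseteq \Sigma \setminus \braces B$.

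Next I would invoke \cref{claim:step:2nd}, feeding it the sequent just obtained as its hypothesis~\eqref{eq:dag:1} — so that the set it calls~$\Sigma$ is played by our~$\Sigma'$ — and feeding it~\eqref{eq:ddag} itself as its hypothesis~\eqref{eq:dag:2}, with $\Delta \equiv A$. Since a bare formula is untouched by the~$(\mathord{-})\Sp{+}$ operation, i.e.\ $A\Sp{+} \equiv A$, the conclusion of \cref{claim:step:2nd} reads $\mathord\vdash\Sp{A}\penalty\binoppenalty\mskip\thickmuskip \Gamma\braces{\brackets{\UseLasy\mdwhtdiamond A\cramped{\sp\bot}\Sb{\Pi'}, A}}$ for some $\Pi' \subseteq \Sigma' \cup (\Sigma \setminus \braces B)$. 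As $\Sigma' \subseteq \Sigma \setminus \braces B$, this gives $\Pi' \subseteq \Sigma \setminus \braces B$, hence $\lvert\Pi'\rvert < \lvert\Sigma\rvert$, and the induction hypothesis applied with~$\Pi'$ in place of~$\Sigma$ delivers $\mathord\vdash\Sp{A}\penalty\binoppenalty\mskip\thickmuskip \Gamma\braces{\brackets{\UseLasy\mdwhtdiamond A\cramped{\sp\bot}\Sb\emptyset, A}}$, as desired.

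I do not expect a genuine obstacle here: the substantive work has already been done in \cref{claim:step:1st,claim:step:2nd}, and this \lcnamecref{claim:annotation-elim} is only the outer loop that wires them together. Its correctness rests on two small bookkeeping checks, which are the places where care is needed. First, both auxiliary lemmas leave the annotations on the brackets of the ambient context~$\Gamma\braces{\mathord{-}}$ untouched and only shrink (or hold fixed) the annotation set on~$\UseLasy\mdwhtdiamond A\cramped{\sp\bot}$, so the induction hypothesis genuinely applies to the same instance with a strictly smaller measure. Second, one must confirm that each round strictly decreases~$\Sigma$ — it drops~$B$ and introduces nothing outside~$\Sigma \setminus \braces B$ — which is precisely what makes the induction on~$\lvert\Sigma\rvert$ (equivalently, a single pass through the elements of~$\Sigma$) well-founded. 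The subtle point to keep in mind is the identity $A\Sp{+} \equiv A$: were $A$ allowed to be an arbitrary sequent here, \cref{claim:step:2nd} would weaken its bracketed subsequents, but since it is a lone formula, nothing of the kind occurs.
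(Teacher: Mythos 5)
Your proposal is correct and follows essentially the same route as the paper: induction on the size of~$\Sigma$, applying \cref{claim:step:1st} to~\eqref{eq:ddag} to extract~$\UseLasy\mdwhtdiamond B\cramped{\sp\bot}$ and then \cref{claim:step:2nd} with that result as~\eqref{eq:dag:1} and~\eqref{eq:ddag} as~\eqref{eq:dag:2}, obtaining a strictly smaller annotation set. Your explicit checks --- that $\Delta \equiv A$ satisfies the side condition of \cref{claim:step:1st} and that $A\Sp{+} \equiv A$ --- are details the paper leaves implicit, and they are right.
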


\begin{proof}
  By induction on the size of $\Sigma$.
  If $\Sigma = \emptyset$, the proof is complete; otherwise, let $B \in \Sigma$.
  From~\eqref{eq:ddag}, \cref{claim:step:1st} yields
  $\mathord\vdash\Sp{A}\penalty\binoppenalty\mskip\thickmuskip \Gamma\braces{\brackets{\UseLasy\mdwhtdiamond A\cramped{\sp\bot} \Sb{\Sigma'} \mathcomma\penalty\binoppenalty \UseLasy\mdwhtdiamond B\cramped{\sp\bot} \mathcomma\penalty\binoppenalty A}}$
  for some $\Sigma' \subseteq \Sigma \setminus\braces B$.
  Then putting this into~\eqref{eq:dag:1}
  and~\eqref{eq:ddag} into~\eqref{eq:dag:2} in \cref{claim:step:2nd},
  we have $\mathord\vdash\Sp{A}\penalty\binoppenalty\mskip\thickmuskip \Gamma\braces{\brackets{\UseLasy\mdwhtdiamond A\cramped{\sp\bot} \Sb{\Sigma''}, A}}$ for some
  $\Sigma'' \subseteq \Sigma' \cup (\Sigma \setminus\braces B) \subsetneq \Sigma$.
  Applying the induction hypothesis to~$\Sigma''$, we obtain the conclusion.
\end{proof}

Now we may drop unused assumptions from a derived sequent, which
can be restated formally in a specific form we need as follows:

\begin{lemma}[Thinning]\label{claim:thinning}
  If\/ $\mathord\vdash\Sp{A}\penalty\binoppenalty\mskip\thickmuskip \Gamma\braces{\UseLasy\mdwhtdiamond A\cramped{\sp\bot} \Sb\emptyset}$, then\/ $\mathord\vdash\Sp{A}\penalty\binoppenalty\mskip\thickmuskip \Gamma\braces{}$.
\end{lemma}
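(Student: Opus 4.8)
The plan is to argue by induction on the derivation witnessing $\mathord\vdash\Sp{A}\penalty\binoppenalty\mskip\thickmuskip \Gamma\braces{\UseLasy\mdwhtdiamond A\cramped{\sp\bot}\Sb\emptyset}$, with a case split on the last rule applied. The key observation — the one that makes the induction trivial — is that the distinguished occurrence of $\UseLasy\mdwhtdiamond A\cramped{\sp\bot}$, namely the one carrying the \emph{empty} annotation, can never be the principal formula of that last rule. Being a $\UseLasy\mdwhtdiamond$-formula, it is not of the right shape to be principal in \cref{rule:id}, \cref{rule:and}, \cref{rule:or}, or \cref{rule:annotated:box}; and it cannot be principal in \cref{rule:annotated:dia} either, since there the principal $\UseLasy\mdwhtdiamond$-formula receives a strictly larger — hence nonempty — annotation set in the conclusion. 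Consequently, in every case the tracked formula sits in a part of the conclusion that is transferred to (each of) the premise(s), still with empty annotation.

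Granting this, the induction is routine. In the base case \cref{rule:id}, erasing $\UseLasy\mdwhtdiamond A\cramped{\sp\bot}$ leaves the atomic pair of the axiom untouched, so $\Gamma\braces{}$ is again an instance of \cref{rule:id}. For a one-premise rule, the tracked $\UseLasy\mdwhtdiamond A\cramped{\sp\bot}\Sb\emptyset$ occurs in the premise verbatim; one applies the induction hypothesis to erase it and then re-applies the rule, noting that any side condition (e.g.\ the depth constraint of \cref{rule:annotated:dia}) is purely structural and is unaffected by the erasure. For the two-premise rules, \cref{rule:and} and the cut on $A$, the annotation on the tracked $\UseLasy\mdwhtdiamond$-formula in the conclusion is the union of its annotations in the two premises; an empty union forces both constituents to be empty, so the tracked formula occurs with empty annotation in both premises, and one concludes by applying the induction hypothesis to each and re-applying the rule — the cut-formula $A$ is distinct from $\UseLasy\mdwhtdiamond A\cramped{\sp\bot}$, so the cut stays a legitimate cut on $A$.

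I do not expect a genuine obstacle here: this is precisely the sort of plain induction promised in the introduction, essentially the counterpart for the annotated $\UseLasy\mdwhtdiamond A\cramped{\sp\bot}$ of ordinary weakening-inversion. The only points needing care are bookkeeping: pinning down once and for all that the principal formula of \cref{rule:annotated:dia} always carries a nonempty annotation, so that the tracked occurrence is provably left untouched by every rule, and handling the implicit exchange in the two-premise rules, where the pairing of $\UseLasy\mdwhtdiamond$-occurrences across the two premises is not uniquely determined — but, as the paper already remarks, any choice works there.
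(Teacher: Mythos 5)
Your proof is correct and follows exactly the paper's (unspelled-out) argument: the paper proves \cref{claim:thinning} simply ``by induction on derivation,'' and the key point you identify --- that an occurrence of $\UseLasy\mdwhtdiamond A\cramped{\sp\bot}$ carrying the empty annotation set can never have been principal in \cref{rule:annotated:dia}, since that rule's conclusion always carries a nonempty set, and that emptiness propagates to both premises of the two-premise rules --- is precisely the bookkeeping the annotations were designed to make available.
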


\begin{proof}
  By induction on derivation.
\end{proof}

\Cref{claim:main} follows immediately
from \cref{claim:annotation-elim,claim:thinning}.

\section{Syntactic Cut-Elimination}\label{sec:cut-elimination}

We are now ready to prove the cut-elimination theorem.
We use the standard double induction
to show the reduction \lcnamecref{claim:reduction}:

\begin{lemma}[Reduction]\label{claim:reduction}
  Suppose
  \begin{prooftree*}[right label template=\rlap{\normalfont\inserttext}]
    \begingroup
      \set{rule style=no rule, template=}
      \begingroup
        \set{rule margin=0pt}
        \hypo{}
        \ellipsis{\small$\mathscr{D}_1$}{}
      \endgroup
    \endgroup
    \infer[no rule]1{ \Gamma\braces*{A} }
    \begingroup
      \set{rule style=no rule, template=}
      \begingroup
        \set{rule margin=0pt}
        \hypo{}
        \ellipsis{\small$\mathscr{D}_2$}{}
      \endgroup
    \endgroup
    \infer[no rule]1{ \Gamma\braces*{A\cramped{\sp\bot}} }
    \infer2[(cut)]  { \Gamma\braces*{} }
  \end{prooftree*}
  If~$\mathscr{D}_1$ and~$\mathscr{D}_2$ are both cut-free derivations,
  then we have\/ $\mathord\vdash\penalty\binoppenalty\mskip\thickmuskip \Gamma\braces{}$.
\end{lemma}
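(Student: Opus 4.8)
The plan is to prove \cref{claim:reduction} by the standard double induction, with outer parameter the size of the cut formula~$A$ and inner parameter the sum of the heights of $\mathscr{D}_1$ and~$\mathscr{D}_2$, splitting on the last rules of the two premise derivations. If either premise is an instance of~\labelcref{rule:id}, the cut vanishes at once: either $\Gamma\braces{}$ is itself an axiom, or, when the principal atom is the cut formula, \cref{claim:admissible} (contraction) finishes it. If the cut formula is not principal in $\mathscr{D}_1$ (and symmetrically for~$\mathscr{D}_2$), I permute the cut above that last rule; the realignment of contexts this requires---notably when the rule consumes a bracket enclosing the cut position---is exactly what the height-preserving \cref{claim:inversion,claim:admissible} supply, and the resulting cut has strictly smaller height sum. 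The principal cases for~\labelcref{rule:and,rule:or} are the textbook ones: replace the cut by cuts on the immediate subformulae (with a few uses of \cref{claim:admissible}), discharged by the outer induction hypothesis since those subformulae are smaller.

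The one delicate case is the modal principal one. Up to duality and swapping the two premises, the cut formula is $\UseLasy\mdwhtdiamond B\cramped{\sp\bot}$, with $\mathscr{D}_1$ ending in~\labelcref{rule:dia} on~$\UseLasy\mdwhtdiamond B\cramped{\sp\bot}$ and $\mathscr{D}_2$ ending in~\labelcref{rule:box} on~$\UseLasy\mdsmwhtsquare B$; hence $\Gamma\braces{\mathord{-}} \equiv \Gamma_0\braces{\Delta\braces{}, \mathord{-}}$ with $\depth(\Delta\braces{\mathord{-}}) > 0$, and the immediate subderivations are $\mathscr{D}_1' \vdash \Gamma_0\braces{\Delta\braces{B\cramped{\sp\bot}}, \UseLasy\mdwhtdiamond B\cramped{\sp\bot}}$ and $\mathscr{D}_2' \vdash \Gamma_0\braces{\Delta\braces{}, \brackets{\UseLasy\mdwhtdiamond B\cramped{\sp\bot}, B}}$. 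As explained in \cref{sec:problem} (see~\cref{reduction:beta}), the naive reduction here calls for a cut on the unchanged formula $\UseLasy\mdwhtdiamond B\cramped{\sp\bot}$ driven one bracket deeper into the tree, and this is precisely the recursion that cannot be controlled.

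Instead I would argue as follows. First, weaken $B\cramped{\sp\bot}$ into the conclusion of~$\mathscr{D}_2$ (by \cref{claim:admissible}) and cut the result against~$\mathscr{D}_1'$, still on the same cut formula; since $\mathscr{D}_1'$ is a proper subderivation of~$\mathscr{D}_1$, the inner induction hypothesis makes this a cut-free derivation of~$\Gamma_0\braces{\Delta\braces{B\cramped{\sp\bot}}}$. Second, feed $\mathscr{D}_2'$---which has precisely the shape required---into \cref{claim:main}, obtaining $\mathord\vdash\Sp{B}\penalty\binoppenalty\mskip\thickmuskip \Gamma_0\braces{\Delta\braces{}, \brackets{B}}$, and then apply~\labelcref{rule:rebase} (legitimate because $\depth(\Delta\braces{\mathord{-}}) > 0$) to reach $\mathord\vdash\Sp{B}\penalty\binoppenalty\mskip\thickmuskip \Gamma_0\braces{\Delta\braces{B}}$. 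Cutting this on~$B$ against the cut-free derivation of~$\Gamma_0\braces{\Delta\braces{B\cramped{\sp\bot}}}$ now yields a derivation of $\Gamma_0\braces{\Delta\braces{}} \equiv \Gamma\braces{}$ whose only cuts are on~$B$.

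It remains to discharge those residual cuts on~$B$: since $B$ is a strict subformula of the cut formula, the outer induction hypothesis provides \cref{claim:reduction} for~$B$, and applying it repeatedly to the topmost cuts of the derivation just built removes them all, leaving the required cut-free proof of~$\Gamma\braces{}$. I expect this modal principal case to be the main obstacle: the whole argument hinges on \cref{claim:main} letting us delete the diagonal formula~$\UseLasy\mdwhtdiamond B\cramped{\sp\bot}$ from the premise of~\labelcref{rule:box} in exchange for cuts on the strictly smaller formula~$B$, which is what dissolves the uncontrollable recursion on $\depth(\Delta\braces{\mathord{-}})$ discussed in \cref{sec:problem}. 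The secondary thing I would watch is the bookkeeping of contexts in the non-principal permutations---routine, but exactly the kind of spot where the ambiguities the paper sets out to avoid can creep back in.
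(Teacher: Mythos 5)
Your proposal is correct and follows essentially the same route as the paper: the standard double induction on cut-formula size and height sum, with the modal principal case resolved by invoking \cref{claim:main} to replace the diagonal formula $\UseLasy\mdwhtdiamond A\cramped{\sp\bot}$ in the premise of~\labelcref{rule:box} with cuts on the strictly smaller~$A$, followed by the same two residual cuts (one at smaller height, one on the smaller formula after \labelcref{rule:rebase}). The only cosmetic difference is that you discharge the cuts on the subformula at the very end rather than immediately after applying \cref{claim:main}, which changes nothing in substance.
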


\begin{proof}
  We proceed by induction on the following lexicographic ordering:
  \begin{enumerate}[(i)]
    \item The size of~$A$; and\samepage
    \item The sum of the heights of~$\mathscr{D}_1$ and~$\mathscr{D}_2$
  \end{enumerate}
  and reduce the cut to lower ones.

  We here show only the case in question:
  \[
    \begin{prooftree}
      \begingroup
        \set{rule style=no rule, template=}
        \begingroup
          \set{rule margin=0pt}
          \hypo{}
          \ellipsis{\small$\mathscr{D}_1$}{}
        \endgroup
      \endgroup
      \infer[no rule]1{ \Gamma\braces*{\Delta\braces*{A\cramped{\sp\bot}}, \UseLasy\mdwhtdiamond A\cramped{\sp\bot}} }
      \infer1[($\UseLasy\mdwhtdiamond$)] { \Gamma\braces*{\Delta\braces*{}, \UseLasy\mdwhtdiamond A\cramped{\sp\bot}} }
      \set{right label template=\rlap{\normalfont\inserttext}}
      \begingroup
        \set{rule style=no rule, template=}
        \begingroup
          \set{rule margin=0pt}
          \hypo{}
          \ellipsis{\small$\mathscr{D}_2$}{}
        \endgroup
      \endgroup
      \infer[no rule]1 { \Gamma\braces*{\Delta\braces*{}, \left\lbrack \UseLasy\mdwhtdiamond A\cramped{\sp\bot}, A \right\rbrack} }
      \infer1[($\UseLasy\mdsmwhtsquare$)] { \Gamma\braces*{\Delta\braces*{}, \UseLasy\mdsmwhtsquare A} }
      \infer2[(cut)] { \Gamma\braces*{\Delta\braces*{}} }
    \end{prooftree}
  \]
  By \cref{claim:main} we have $\mathord\vdash\Sp{A}\penalty\binoppenalty\mskip\thickmuskip \Gamma\braces{\Delta\braces{}, \brackets{A}}$,
  where all cuts are admissible by the induction hypothesis, and
  hence $\mathord\vdash\penalty\binoppenalty\mskip\thickmuskip \Gamma\braces{\Delta\braces{}, \brackets{A}}$.
  We can now reduce the cut above as follows:
  \begin{prooftree*}
    \begingroup
      \set{rule style=no rule, template=}
      \begingroup
        \set{rule margin=0pt}
        \hypo{}
        \ellipsis{\small$\mathscr{D}_1$}{}
      \endgroup
    \endgroup
    \infer[no rule]1 { \Gamma\braces*{\Delta\braces*{A\cramped{\sp\bot}}, \UseLasy\mdwhtdiamond A\cramped{\sp\bot}} }
    \begingroup
      \set{rule style=no rule, template=}
      \begingroup
        \set{rule margin=0pt}
        \hypo{}
        \ellipsis{\small$\mathscr{D}_2$}{}
      \endgroup
    \endgroup
    \infer[no rule]1 { \Gamma\braces*{\Delta\braces*{}, \left\lbrack \UseLasy\mdwhtdiamond A\cramped{\sp\bot}, A \right\rbrack} }
    \infer1[($\UseLasy\mdsmwhtsquare$)] { \Gamma\braces*{\Delta\braces*{}, \UseLasy\mdsmwhtsquare A} }
    \infer[dashed]1[(weak)] { \Gamma\braces*{\Delta\braces*{A\cramped{\sp\bot}}, \UseLasy\mdsmwhtsquare A} }
    \infer2[(cut)]  { \Gamma\braces*{\Delta\braces*{A\cramped{\sp\bot}}} }
    \begingroup
      \set{rule style=no rule, template=}
      \begingroup
        \set{rule margin=0pt}
        \hypo{}
        \ellipsis{}{}
      \endgroup
    \endgroup
    \infer[no rule]1       { \Gamma\braces*{\Delta\braces*{}, \brackets*{A}} }
    \infer[dashed]1[(rebase)] { \Gamma\braces*{\Delta\braces*{A}} }
    \infer2[(cut)]    { \Gamma\braces*{\Delta\braces*{}} }
  \end{prooftree*}
  Both of the cuts here are admissible by the induction hypothesis and,
  unlike the reduction~\eqref{reduction:beta}, a third cut is no longer required
  thanks to the diagonal-formula-elimination subprocedure.

  The other cases are standard and almost the same as for the~$\logic K$ case
  of Brünnler~\cite{brünnler2009deep}, where no special consideration
  regarding well-foundedness is needed since we add no new induction parameters.
\end{proof}

\begin{theorem}[Cut-elimination]
  The cut-rule is admissible.
\end{theorem}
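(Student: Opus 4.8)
The plan is to obtain the theorem from the reduction \lcnamecref{claim:reduction} by the standard outer induction that removes cuts one at a time, working from the leaves downward. Concretely, I would prove by induction on the number~$n$ of applications of~\labelcref{rule:cut} occurring in a given derivation~$\mathscr{D}$ that $\mathscr{D}$ can be replaced by a cut-free derivation of the same end-sequent. If $n = 0$ there is nothing to do. If $n \geq 1$, then among the applications of~\labelcref{rule:cut} in~$\mathscr{D}$ I pick one whose two premise subderivations contain no further cut; such a \emph{topmost} cut exists because $\mathscr{D}$ is a finite tree and the subderivation relation is well-founded. Writing its conclusion as $\Gamma\braces*{}$ and its premises as $\Gamma\braces*{A}$ and $\Gamma\braces*{A\cramped{\sp\bot}}$, the two feeding subderivations are cut-free, so \cref{claim:reduction} applies and yields $\mathord\vdash\penalty\binoppenalty\mskip\thickmuskip \Gamma\braces*{}$.

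Next I would splice this cut-free derivation of~$\Gamma\braces*{}$ back into~$\mathscr{D}$ in place of the chosen topmost cut. Since every rule of the calculus (\cref{fig:rules}) depends on its premises only through their end-sequents, replacing a subderivation by another derivation with the same end-sequent again produces a legal derivation of the original end-sequent; and the resulting derivation contains exactly $n - 1$ applications of~\labelcref{rule:cut}, because the part we excised contained precisely one cut (the chosen one, nothing above it being a cut) while its replacement contains none, and the rest of the derivation is untouched. The induction hypothesis then furnishes a cut-free derivation of the same end-sequent, which completes the induction; the theorem is the instance in which $\mathscr{D}$ is an arbitrary derivation ending in a single cut, with cut-free premises absorbed along the way.

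I do not expect any real obstacle at this stage: all the difficulty of cut-elimination for~$\logic{GL}$ has been concentrated in \cref{claim:reduction}, whose only delicate case---the $\UseLasy\mdwhtdiamond$/$\UseLasy\mdsmwhtsquare$ principal one---is discharged through the diagonal-formula-elimination subprocedure of \cref{claim:main} rather than through an extra induction measure, so the outer induction here adds no new well-foundedness concern. The sole points that should be stated with a little care are the existence of a topmost cut (finiteness of the derivation tree) and the additivity of the cut count under splicing, both of which are immediate. One could equivalently run the outer induction on, say, the multiset of heights of cut-applications, but eliminating a topmost cut keeps the bookkeeping to a minimum and matches the \enquote{series of straightforward inductions} spirit of the paper.
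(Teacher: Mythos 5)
Your proposal is correct and is exactly the argument the paper intends: the paper's proof of the theorem is simply \enquote{Immediate from \cref{claim:reduction}}, leaving implicit the standard outer induction on the number of cuts that you spell out (eliminate a topmost cut via \cref{claim:reduction}, splice, and recurse). Your elaboration of the topmost-cut selection and the cut-count bookkeeping is sound and adds nothing that conflicts with the paper.
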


\begin{proof}
  Immediate from \cref{claim:reduction}.
\end{proof}

\section{Conclusion}\label{sec:conclusion}

We have presented a syntactic cut-elimination proof for $\logic{GL}$
by combining the following two ideas:
\begin{itemize}
  \item
    The diagonal-formula-elimination subprocedure splits off
    the difficult part of $\logic{GL}$'s cut-elimination,
    thereby allowing for the proof in a more modular way
    without any trouble on the termination of the entire procedure.
  \item
    The nested-sequent approach enables straightforward induction proofs,
    where we rely only on local assumptions and can perform rewriting
    without having to grasp the entire derivation with the help of annotations.
\end{itemize}
This allows for a more concise and clear proof than previous methods
in a composable way.

We employed a context-sharing form of the cut-rule in this paper,
but other forms can be considered.
For example, Poggiolesi~\cite{poggiolesi2009purely} adopted a context-independent one,
and Brünnler~\cite{brünnler2009deep} considered a special form of multicut
called \emph{Y-cut} for modal logics with the axiom~(4).
These variants have some impact on cut-elimination, but do not seem to
provide a fundamental solution in the $\logic{GL}$ case.
Another possibility is to extend cut to be applicable to subsequents.
Such a generalization has been developed for basic modal logics
such as $\logic K$ and $\logic{S4}$ by Chaudhuri, Marin, and Straßburger~\cite{chaudhuri+2016focused}, and
seems to fit also well with $\logic{GL}$, which is future work.

\paragraph*{Acknowledgments.}

We would like to thank the anonymous reviewers
for their insightful comments and suggestions.

\bibliographystyle{eptcs}
\bibliography{main}

\end{document}